\newtheorem{theorem}{Theorem}
\newtheorem{lemma}{Lemma}
\newtheorem{proposition}{Proposition}
\newtheorem{corollary}{Corollary}
\theoremstyle{definition}
\newtheorem{example}{Example}
\newtheorem{remark}{Remark}
\renewenvironment{proof}[1][]{\par\vskip12pt\noindent%
     {\it Proof\ifx#1==\/:\else{ #1\/:}\newline\fi\ }\bgroup}
     {\egroup\quad\strut\hfill$\blacksquare$\vskip12pt}
\newcommand{\ic}{\ensuremath{i}}
\def\idty{{\leavevmode\rm 1\mkern -5.4mu I}} 
\def\Rl{{\mathbb R}}\def\Cx{{\mathbb C}}
\def\Ir{{\mathbb Z}}\def\Nl{{\mathbb N}}
\def\norm #1{\Vert #1\Vert}
\def\sign{{\mathop{\rm sign}\nolimits}\,}
\def\bra #1{\langle #1\vert}
\def\ket #1{\vert #1\rangle}
\def\braket #1#2{\langle #1 \vert #2\rangle}
\def\ketbra #1#2{\vert #1\rangle \langle #2\vert}
\def\kettbra#1{\ketbra{#1}{#1}}
\def\abs#1{\vert#1\vert}
\def\Ell#1{{\mathcal L^{#1}}}
\def\inv{^{-1}}
\def\HH{{\mathcal H}}\def\KK{{\mathcal K}}
\def\Ut{{\widetilde U}}
\def\RSJK{R^{\rm SJK}}  
\def\tauSJK{\tau^{\rm SJK}}  
\def\qSJK{q^{\rm SJK}}  
\def\Gt{{\widetilde G}}
\def\RE{\Re e} 
\def\hata{{\widehat a}}
\def\muh{{\widehat\mu}}
\def\cyclic#1{(U^\Ir#1)^{\perp\perp}} 
\def\taudoof{\tau^{\rm Polya}}
\def\vartau{V_\tau}
\newcommand{\WSp}{\ensuremath{\ell^2(\Ir^s)\otimes \KK}}
\begin{document}
\title{Recurrence for discrete time unitary evolutions}
\author{F. A. Gr\"unbaum$^1$}
\address{$^1$
Department of Mathematics \\ University of California \\ Berkeley CA 94720}
\author{L. Vel\'azquez$^2$}
 \address{$^2$ Departamento de Matem\'{a}tica Aplicada and IUMA  \\ Universidad de Zaragoza \\ Mar\'{\i}a de Luna 3, 50018 Zaragoza, Spain}
\author{ A.H. Werner$^3$}

\author{ R.F.Werner$^3$}
\address{$^3$ Institut f\"ur Theoretische Physik \\ Leibniz Universit\"at Hannover \\ Appelstr. 2, 30167 Hannover, Germany}

\email{albertogrunbaum@yahoo.com,velazque@unizar.es} \email{albert.werner@itp.uni-hannover.de,reinhard.werner@itp.uni-hannover.de}

\maketitle
\begin{abstract}
We consider quantum dynamical systems specified by a unitary operator $U$ and an initial state vector $\phi$. In each step the unitary is followed by
a projective measurement checking whether the system has returned to the initial state. We call the system recurrent if this eventually happens with
probability one. We show that recurrence is equivalent to the absence of an absolutely continuous part from the spectral measure of $U$ with respect
to $\phi$. We also show that in the recurrent case the expected first return time is an integer or infinite, for which we give a topological
interpretation. A key role in our theory is played by the first arrival amplitudes, which turn out to be the (complex conjugated) Taylor coefficients
of the Schur function of the spectral measure. On the one hand, this provides a direct dynamical interpretation of these coefficients; on the other
hand it links our definition of first return times to a large body of mathematical literature.
\end{abstract}

\section{Introduction}
One of the standard questions addressed in the theory of Markov chains is the distinction of transient and recurrent processes. By definition, a
Markov chain is called recurrent, if almost every path comes back to its starting point, otherwise transient. In this paper we discuss an analogous
question for unitary quantum dynamics. Immediately, this presents us with a problem, which is typical for many generalizations of classical concepts
to the quantum world: The definition as given for Markov chains clearly requires some monitoring of the process: we have to check after every step
whether the particle has returned. But this monitoring, if it is to give any non-trivial information about the system, necessarily changes the
dynamics. Therefore, there are two options: We can either try to reformulate the problem in such a way that the monitoring is not needed, or else we
include the monitoring into the description. An approach of the first kind has recently been put forward by \v{S}tefa\v{n}\'ak, Jex and Kiss
\cite{Stefanak} on the basis of a formula, which P\'olya derived for Markov chains, and which is only based on the return probabilities, i.e., the
probability for the particle to be back {\it at} a given time, without any conditions on the path leading there. We call unitaries with specified
initial state satisfying the criterion \cite{Stefanak} {\it SJK-recurrent}. In this paper we choose the second option. Our notion of recurrence
implies SJK-recurrence, but not conversely. We refrain from comparing our definition with a host of other ones in the literature.

A common feature of these approaches is that the recurrence criterion depends only on the unitary operator $U$ and the initial state $\phi$, and
hence only on the scalar measure $\mu(du)=\braket\phi{E(du)\phi}$ on the unit circle, which is obtained from the spectral measure $E$ of $U$ (so
$U=\int uE(du)$). Ultimately, both criteria have a simple expression in terms of properties of $\mu$. This connects the recurrence problem to a rich
mathematical literature about measures on the unit circle and their orthogonal polynomials (the canonical review of the field is
\cite{Simon1,Simon2}). The first return probabilities in our approach are the squared moduli of the Taylor coefficients of the so-called Schur
function \cite{Schur1917} of the measure, which so far did not seem to have a direct dynamical interpretation. Our main result is that the process is
recurrent iff the Schur function is ``inner'', i.e., has modulus one on the unit circle. Furthermore, we show that the winding number of this
function has the direct interpretation as the expected time of first arrival, which is hence an integer.

We came to this investigation through our respective work on quantum walks \cite{timeRandom,CGMV2010,CMGVI}. Quantum walks which are basically
translation invariant (up to a local perturbation) are not recurrent, but the analysis of first returns is by no means limited to deciding
recurrence. We show how the methods developed in the quantum walk context can be used to get explicit expressions in simple cases.

Our paper is organized as follows. In Sect.~\ref{sec:rec} we recapitulate the classical idea of recurrence and the quantum version by
\cite{Stefanak}, and then propose our alternative based on monitoring the dynamics. In Sect.~\ref{sec:proof1} we give a spectral
characterization of this notion (Theorem~\ref{thm:crit}) and, for the sake of comparison, a similar characterization for the SJK-recurrence. The
determination of the expected first return time is in Theorem~\ref{thm:time} of Sect.~\ref{sec:time}: the main result is that the
expectation of the first return time is, surprisingly, an integer or infinite. Again detailed results on the corresponding SJK-quantity are provided
for comparison. Sect.~\ref{sec:finite} treats in more detail the finite dimensional case and shows, in particular, how the variance of
the first return time diverges when this integer value changes. Examples from the field of quantum walks are presented in Sect.~\ref{sec:as}. We
provide two rather different methods for computing return probabilities, which however, have a large overlap of applicability. We close with  an
outlook on some examples of singular spectrum and natural generalizations of our basic notion of recurrence.

\section{Notions of recurrence}\label{sec:rec}
In this section we compare our notion of recurrence to the one introduced by \v{S}tefa\v{n}\'ak, Jex and Kiss \cite{Stefanak}, which avoids
introducing explicit monitoring steps. Their idea is based on the theory of Markov chains, particularly the idea of recurrence as initiated by
P\'olya \cite{Polya}.  We briefly review this classical background \cite{Karlin} here, partly to improve the understanding of \cite{Stefanak}, but
chiefly as a basis to point out close analogies between the classical return probabilities and return amplitudes arising later in our paper. After
introducing our proposal we provide an example demonstrating that the two approaches may come to opposite conclusions.
\subsection{The classical case}\label{sec:classNotion}
A (discrete time) Markov chain on a countable state space $X$ is defined by its transition probability matrix $P_{xy}$, where $x,y\in X$. From the
interpretation of $P_{xy}$ as the probability for the process to move from $x$ to $y$ in one step it is clear that $P_{xy}\geq0$ and
$\sum_yP_{xy}=1$. Then the Markov property entails that the probability for going from $x$ to $y$ along intermediate steps $(x_1,\ldots,x_{n-1})$ is
$P_{x,x_1}P_{x_1,x_2}\cdots P_{x_{n-1},y}$. If we do not care for the intermediate steps these variables are summed over, resulting in the matrix
power  $P^n_{xy}$. Now we single out a point $0\in X$. It will be the initial point for all paths we consider, and we are interested in the returns
to $0$. The probability to return in $n$ steps is obviously
\begin{equation}\label{pnMarkov}
    p_n=P^n_{00}=\sum_{x_1,\ldots,x_{n-1}}P_{0,x_1}P_{x_1,x_2}\cdots P_{x_{n-1},0}.
\end{equation}
We would also like to know the probability $q_n$ for this same transition subject to the condition that no intermediate step is $0$. That is,
\begin{equation}\label{qnMarkov}
    q_n=\sum_{\substack{x_1,\ldots,x_{n-1}\\
                 \mbox{all}\neq0}}P_{0,x_1}P_{x_1,x_2}\cdots P_{x_{n-1},0}
\end{equation}
These numbers are the probability law for the random variable {\it first return time}. Clearly, \eqref{qnMarkov} is the sum over just a subset of the
terms in \eqref{pnMarkov}. If we split the sum \eqref{pnMarkov} according to the highest index $k$ at which $x_k=0$, we get a sum in which the
summation indices $x_\ell$ with $\ell<{k}$ are unconstrained, but those with $\ell>k$ must be non-zero. In other words
\begin{equation}\label{pnConvolve}
    p_n=\sum_{k=0}^n p_kq_{n-k},
\end{equation}
where for the endpoints we use the natural conventions $p_0=1$ and $q_0=0$. This formula is not valid for $n=0$ (where it would give the wrong
equation $1=p_0=p_0q_0=0$). Therefore, if we introduce the generating functions
\begin{equation}\label{gfMarkov}
    \widehat p(z)=\sum_{n=0}^\infty p_nz^n\quad \mbox{and }\quad
    \widehat q(z)=\sum_{n=0}^\infty q_nz^n
\end{equation}
and sum over \eqref{pnConvolve} (the $n=0$ term has to be treated separately) we get
\begin{equation}\label{qhat}
    \widehat p(z)=1+ \widehat p(z)\widehat q(z).
\end{equation}
Hence the {\it return probability} or, more precisely, the probability to eventually return to the initial state $0$,
is now
\begin{equation}\label{siMarkov}
    R^C = \sum_nq_n = \widehat q(1)
        = 1-\frac1{\widehat p(1)} = 1-\frac1{\sum_n p_n}.
\end{equation}
Therefore the Markov chain is recurrent from the point $0$ if and only if $R^C=1$, or $\widehat p(1)=\sum_np_n=\infty$.

\subsection{Recurrence without monitoring}
Now the criterion for recurrence proposed by \cite{Stefanak} is to apply \eqref{siMarkov} to the sequence of quantum return probabilities
\begin{equation}\label{pn}
    p_n=\abs{\braket\phi{U^n\phi}}^2.
\end{equation}
So a unitary $U$ with initial state $\phi$ is called {SJK-recurrent} iff $\sum_np_n=\infty$. It is clear that the entire computation leading up to
the criterion in the Markov case, being based on conditioning, loses its significance in the quantum case. Also the operational meaning as the
probability of some property of paths of the process is lost. Therefore \cite{Stefanak} proposes an alternative operational criterion, based on the
observation that $\sum_np_n=\infty$  is equivalent to $\prod_n(1-p_n)=0$. \cite{Stefanak} defines the return probability $\RSJK$ by means
of
\begin{equation}\label{SJK}
    \RSJK=1-\prod_{n=1}^\infty(1-p_n).
\end{equation}
Operationally, this corresponds to the following procedure \cite{Stefanak}: ``Take a system and measure the position of the walker after one time
step at the origin, then discard the system. Take a second, identically prepared system and let it evolve for two time steps, measure at the origin,
then discard the system. Continue similarly for arbitrarily long evolution time. The probability that the walker is found at the origin in a single
series of such measurement records is the P\'olya number [i.e., $\RSJK$].'' This procedure is strange indeed. Imagine what it would mean in the
parable of the prodigal son. The father is clearly interested in the return of his son, but when he does not come back until year $n$, he sends out
another son and asks whether this one returns exactly in year $n+1$, ignoring all earlier and later arrivals by this son, and indeed of all those
sent before. While being rather wasteful of sons, it still does not provide an answer to the question ``Does every prodigal son eventually return?''.
Yet the procedure of \cite{Stefanak} is probably the best we can do if we want to avoid observing the quantum process. We will avoid such strangeness
in our approach but, of course, this means that we must disturb the free evolution and explicitly describe the measurements.

\subsection{Arrival by absorption}\label{sec:def}
Here we go the alternative route of explicitly introducing measurements to check the return of the system after every step.Starting at
$\phi_0=\phi$, one step thus consists of the unitary transformation $\phi_n\to U\phi_n$ followed by the measurement of the projection
$\kettbra\phi$. If this gives a positive result (for which the probability is $\abs{\braket\phi{U\phi_n}}^2$) the experiment is over. Otherwise the
system is left in the state $\phi_{n+1}=c_{n+1}(\idty-\kettbra\phi)U\phi_n$, where we choose $c_{n+1}$ so that $\norm{\phi_{n+1}}=1$. In this
iteration we will always have $\phi_n$ proportional to $\Ut^n\phi$, where
\begin{equation}\label{Ut}
    \Ut= (\idty-\kettbra\phi)U.
\end{equation}
Hence we can set $\phi_n=\Ut^n\phi/\norm{\Ut^n\phi}$. The normalization factor $s_n=\norm{\Ut^n\phi}^2$ has the direct interpretation as the {\it
survival probability}, i.e., for the walker to remain undetected for all steps including the $n^{\rm th}$. The probability for detection in step $n$,
conditioned upon survival up to $n-1$ is $\abs{\braket\phi{U\phi_{n-1}}}^2$. To get the absolute probability for first detection in step $n$ we have
to multiply this with the survival probability $\norm{\Ut^{n-1}\phi}^2$. With the normalization convention for $\phi_{n-1}$ we get the {\it first
arrival probability} in the form $\abs{a_n}^2$, where $a_n$ is the {\it first arrival amplitude}
\begin{equation}\label{an}
     a_n=\braket\phi{U\Ut^{n-1}\phi}, \quad n\geq1.
\end{equation}
The total probability for events up to and including the $n^{\rm th}$ step,
i.e., detection at step $k\leq n$ or survival, thus adds up as
$$1=\sum_{k=1}^n\abs{a_k}^2+ \norm{\Ut^n\phi}^2.$$
The {\it return probability} is therefore
\begin{equation}\label{si}
    R=\sum_{n=1}^\infty\abs{a_n}^2=1-\lim_{n\to\infty}\norm{\Ut^n\phi}^2.
\end{equation}
Accordingly, we call the pair $(U,\phi)$ {\it recurrent} if $R=1$, and {\it transient} otherwise.

This notion, which is the subject of our paper is not equivalent to SJK-recurrence, as the following example shows.

\begin{example}\label{sec:exampleDifferent}
Let us look at an elementary example, which demonstrates the difference between SJK-recurrence and the
notion we introduced. The Hilbert space will be $\HH=\Cx\oplus\ell^2(\Ir)$, in which we will denote the basis vectors as $\ket\ast$ and $\ket x,\
x\in\Ir$. The unitary is defined by $U\ket\ast=\ket\ast$ and $U\ket x=\ket{x+1}$, and the initial state will be $\phi=\alpha\ket\ast+\beta\ket0$ with
$\abs\alpha^2+\abs\beta^2=1$. Clearly, we have
\begin{equation}\label{exUn}
    \braket\phi{U^n\phi}=\abs\alpha^2(1+\delta_{n0}).
\end{equation}
To determine $\Ut_n\phi$ we can solve the recursion for the expansion coefficients, which gives
\begin{equation}\label{exUtn}
    \Ut^n\phi=\alpha\abs\beta^{2n}\,\ket\ast+\beta\,\ket n
          -\beta\abs\alpha^{2}\sum_{k=1}^{n}\abs\beta^{2(k-1)}\,\ket{n-k}
\end{equation}
Hence
\begin{equation}\label{normexUn}
    s_n=\norm{\Ut^n\phi}^2=\frac{2\abs\beta^2}{1+\abs\beta^2}+ \frac{\abs\alpha^2\abs\beta^{2n}}{1+\abs\beta^2}
\end{equation}
This converges to $s_\infty={2\abs\beta^2}/({1+\abs\beta^2})$, which is non-zero whenever $\beta\neq0$. In other words, according to our definition,
the only recurrent case in this example is $\phi=\ket\ast$, which does not move at all. All other cases are transient. In contrast,
$p_n=\abs{\braket\phi{U^n\phi}}^2=\abs\alpha^4$ for all $n>1$, so $\RSJK=1$ unless $\alpha=0$. Hence SJK-recurrence holds with the only exception of
the initial state which moves according the shift on $\ell^2(\Ir)$. \hfill$\square$
\end{example}
\section{Spectral characterization of recurrence}\label{sec:proof1}
\subsection{The mathematical setting}
We have chosen our definition so that it only depends on the pair $(U,\phi)$ made up of a unitary operator on a Hilbert space $\HH$ and a unit vector
$\phi\in\HH$. It is clear from the definitions that only the subspace generated by the vectors $U^n\phi$ plays a role. We may thus ignore parts of
the Hilbert space never explored by the evolution, and assume without loss that $\phi$ is cyclic for $U$. Similarly, $(U,\phi)$ are only needed up to
unitary equivalence, i.e., the pair $(VUV^*,V\phi)$ with $V$ any unitary operator yields the same return probabilities. One invariant of pairs
$(U,\phi)$ under unitary equivalence is the $\phi$-expectation of the spectral measure of $U$. This is  the probability measure $\mu$ on the unit
circle $S^1=\{u\in\mathbb{C}:|u|=1\}$, defined by $\mu(du)=\braket\phi{E(du)\phi}$, where $E$ is the spectral measure of $U$. Equivalently, we can
characterize the system by the moments of $\mu$, i.e., by the Fourier coefficients
\begin{equation}\label{mun}
    \mu_n=\int\!\!\mu(du)\ u^n=\braket\phi{U^n\phi}, \quad n\in\Ir.
\end{equation}
The measure $\mu$ is, in fact, a complete invariant: We may choose as a canonical form of the pair $(U,\phi)$ the Hilbert space
$\HH_\mu=\Ell2(S^1,\mu)$, the unitary $U_\mu$ of multiplication by the argument $u$, and the vector $\phi_\mu(u)=1$. These are equivalent to the
original setting by virtue of the operator $V:\HH\to\HH_\mu$ defined by $(VU^n\phi)(u)=u^n$, for $n\in\Ir$, which is unitary because
$\braket{U^m\phi}{U^n\phi}=\braket{u^m}{u^n}=\mu_{n-m}$.

Various functions have been introduced to characterize probability measures on the unit circle, all of which are analytic for $\abs z<1$, and whose
boundary values for $\abs z\to1$ typically contain relevant information about the measure. We use the {\it moment generating} or {\it Stieltjes
function}
\begin{equation}\label{muh}
    \muh(z)=\sum_{n=0}^\infty\mu_nz^n=\int\frac{\mu(du)}{1-uz},
\end{equation}
the {\it Carath\'eodory function}
\begin{equation}\label{Carat}
    F(z)=\int\!\mu(du)\ \frac{u+z}{u-z}=2\,\overline{\muh}({z})-1,
\end{equation}
and the {\it Schur function}
\begin{equation}\label{Schur}
    f(z) = \frac{1}{z} \frac{F(z)-1}{F(z)+1}=\frac1z\
            \frac{\overline{\muh}({z})-1}{\overline{\muh}({z})}.
\end{equation}
Here we have used the convention that, for an analytic function $g$, the analytic function with the conjugated Taylor coefficients is denoted by
$\overline g$, i.e., $\overline{g}(z)=\overline{g(\overline z)}$.

For $\abs z<1$ we have
\begin{equation}\label{ReF}
    \RE F(z)= \int\mu(du)\,\frac{1-\abs z^2}{\abs{u-z}^2} =\frac{1-|zf(z)|^2}{|1-zf(z)|^2}>0,
\end{equation}
and hence $\abs{f(z)}<1$, due to Schwarz's lemma. The Schur function thus maps the open unit disk to itself. The proof that, conversely every analytic function on the disc
with $\RE F>0$ and $F(0)=1$ (resp. $\abs f<1$) arises in this way from a probability measure $\mu$ is the subject of classic papers by Caratheodory,
Schur,  Herglotz and many others. For a lively discussion of the history we recommend Simon's book \cite[Ch.~1.3,\ Notes]{Simon1}.

Both $f$ and $F$ have radial limits $g(e^{it})=\lim_{r\to1^-}g(re^{it})$ for almost all $t$ so that we can consider that they are extended a.e. in
the unit circle. The absolutely continuous part of $\mu$ is supported on the points $z \in S^1$ such that the radial limit satisfies $\RE F(z)>0$,
i.e., $|f(z)|<1$. In fact, the density of the absolutely continuous part is given by \eqref{ReF}. Therefore, the singular measures are characterized
by the fact that the related Schur function is ``inner", i.e. $|f(z)|=1$ a.e. in $S^1$ \cite{Simon1}. The singular part of $\mu$ is concentrated on
the points $z \in S^1$ such that $\lim_{r\to1^-}\RE F(rz)=\infty$, that is, $zf(z)=1$. Concerning the pure point part of $\mu$, the mass of any point
$z \in S^1$ is given by
\[
\mu(\{z\}) = \lim_{r\to1^-} \frac{1-r}{2} F(rz) = \lim_{r\to1^-} \frac{1-r}{2} \frac{1+rzf(rz)}{1-rzf(rz)}.
\]

\subsection{Generating functions}
Let us now come back to the return problem as posed in Sect.~\ref{sec:def}. That is, we set $\Ut=(\idty-\kettbra\phi)U$, and ask for the survival
probabilities $s_n=\norm{\Ut^n\phi}^2$ in the limit $n\to\infty$, resp.\ the sum $\sum_{n=1}^\infty\abs{a_n}^2$ over the arrival amplitudes
$a_n=\braket\phi{U\Ut^n\phi}$.

To this end we compute a generating function for the vectors $\Ut^n\phi$, starting from that of the operators $\Ut^n$,
\begin{equation}\label{geneRes}
    \widetilde G(z)=\sum_{n=0}^\infty z^n \Ut^n=(\idty-z\Ut)\inv, \quad \abs z<1.
\end{equation}
This is essentially the resolvent of $\Ut$, and can be related to the corresponding expression $G(z)=(\idty-zU)\inv$ for the unitary evolution by the
perturbation resolvent formula, which is especially efficient in this case, because the perturbation $U-\Ut$ is a rank 1 operator. For completeness
we repeat the short derivation here. We get
\begin{eqnarray}
   G(z)-\Gt(z)&=&(\idty-z\Ut)\inv\Bigl((\idty-z\Ut)-(\idty-zU)\Bigr)(\idty-zU)\inv \nonumber\\
      &=&z\,\Gt(z)\kettbra\phi UG(z)\nonumber\\
  G(z)&=&\Gt(z)\bigl(\idty+z\kettbra\phi U G(z)\bigr)\nonumber\\
 {G(z)\phi} &=&{\Gt(z)\phi}\bigl(1+z\braket\phi{ U G(z)\phi}\bigr)\label{maponphi}\\
 {\Gt(z)\phi}&=&\frac{G(z)\phi}{1+z\braket\phi{ U G(z)\phi}}.     \label{Krein1}
\end{eqnarray}
Here, at \eqref{maponphi} the previous equation was applied to $\phi$. The function in the denominator can be simplified to
\begin{equation}\label{simplifytomuh}
    1+z\braket\phi{ U G(z)\phi}=\int\!\!\mu(du)\Bigl(1+\frac{zu}{1-zu}\Bigr)=\muh(z).
\end{equation}
We now take the scalar product of \eqref{Krein1} with $U^\dag\phi$ to obtain a generating function for the arrival amplitudes, and use
\eqref{simplifytomuh} also to simplify the numerator of \eqref{Krein1}. This yields
\begin{eqnarray}
  \hata(z)&=&\sum_{n=1}^\infty a_nz^n=\sum_{n=0}^\infty\braket\phi{U\Ut^n\phi}z^{n+1} = z\,\braket\phi{U\Gt(z)\phi} \nonumber\\
          &=&\frac{\muh(z)-1}{\muh(z)}  \label{qrenewal}\\
          &=& z\,\overline{f}({z}). \label{hata}
\end{eqnarray}
That is, the Schur function is essentially the generating function for the arrival amplitudes.
This leads to an interesting analogy between the classical case and the quantum case discussed here:

\vskip12pt
\begin{center}
\begin{tabular}{r|c|c}
   &classical&quantum\\ \hline
return in $n$ steps & probability &amplitude \\
                    & $\Bigl.p_n\Bigr.$       & $\mu_n$ \\
generating function & $\widehat p$\hfil see \eqref{gfMarkov}  & $\muh$\hfil see \eqref{muh}
\\ \hline
first return after $n$ steps & probability &amplitude \\
                    & $\Bigl.q_n\Bigr.$       & $a_n$ \\
generating function & $\widehat q$\hfil see \eqref{gfMarkov}
                                     & $\hata=z\overline f$\quad\hfil see \eqref{hata}
\\ \hline
renewal equation    &$\displaystyle\widehat q=1-\frac1{\widehat p}$\quad\eqref{qhat}
                       &$\displaystyle\hata=1-\frac1{\muh}$\quad\eqref{qrenewal}
\end{tabular}
\end{center}
In the classical case one takes the return probabilities $p_n$ and uses the renewal equation to get the first return probabilities $q_n$. In our
approach the renewal equation is used at the level of amplitudes, to pass from $\mu_n$ to the arrival amplitudes $a_n$, which are then squared to
give the arrival probabilities. This is in keeping with Feynman's elementary exposition of the ``first principles of quantum
mechanics''\cite{Feynman}, losely speaking: ``In quantum theory add amplitudes  and square at the end to get probabilities''. The continuous time
limit of this idea is the root of his path integrals, and indeed there is an approach to continuous time measurements in quantum mechanics
\cite{Mensky} based on the selection of suitable subsets of paths in the path integral. On the other hand, Feynman also explains that when you
monitor the process (his example is a measurement near the double slit), you should {\it first} square and then add probabilities. This would
identify  our approach as a ``non-monitoring'' one, in an apparent contradiction to our exposition in Sect.~\ref{sec:rec}. It is debatable whether
the principle  ``Add amplitudes where the classical theory would have you add probabilities'', which works so beautifully in Feynman's Quantum
Mechanics 101, is really a fundamental law. In the end, it is Linear Algebra which can give us a unified understanding of the two columns in our
table: Both columns are based on the powers of a matrix in which the first row has been replaced by zeros: in the quantum case by multiplying with
the projection $\idty-\kettbra\phi$ and in the classical case by disregarding all jumps to 0. The renewal equation depends only on this, regardless
of further structures such as unitarity and elementwise positivity of transition operators.

We remark that the above table suggests a reading of the SJK approach as ``take the $p_n$ from quantum mechanics, and define $q_n$ by the renewal
equation''. This does lead to the SJK-recurrence criterion. However, in the paper \cite{Stefanak} the quantities $q_n$ are avoided, as described
above after \eqref{SJK}. The reason for this is not stated in the paper, but there is a good reason, which we discuss in Sect.~\ref{sec:taudoof}: The
$q_n$ can be negative.

\subsection{The recurrence criterion}\label{sect:recc_crit}
We can now utilize the identification \eqref{hata} of the generating function $\hata$ to turn the condition for recurrence, i.e.
$\sum_n\abs{a_n}^2=1$, into a condition for the Schur function $f$. To this end we fix $r<1$, and consider the series
$\hata\bigl(re^{it}\bigr)=\sum_nr^na_n\,e^{itn}$ as a Fourier series and use the Plancherel theorem:
\begin{equation}\label{ansqsum}
    \sum_nr^{2n}\abs{a_n}^2
       = \frac{r^2}{2\pi}\int_{-\pi}^\pi\!\!\!dt\ \abs{f(re^{-it})}^2.
\end{equation}
Now recall that the Schur function is bounded by $1$, so for the limit of this expression as $r\to1$ to be equal to the maximum $1$ we must have that
$\abs{f(re^{it})}\to1$ for almost all $t$. In other words, $f$ has to be an inner function. As we pointed out, this has a simple characterization
directly in terms of the measure.

\begin{theorem}\label{thm:crit}
  Consider a unitary operator $U$ and an initial vector $\phi$. Then the following are equivalent:
  \begin{itemize}
    \item[(1)] $(U,\phi)$ is recurrent.
    \item[(2)] The Schur function $f$ is inner.
    \item[(3)] $\mu$ has no absolutely continuous part.
  \end{itemize}
\end{theorem}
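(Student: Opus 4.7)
The plan is to treat the equivalences as a short loop that ties together two generating-function identities already derived: $\widehat a(z)=z\overline f(z)$ from \eqref{hata}, and $\RE F(z)=(1-|zf(z)|^2)/|1-zf(z)|^2$ from \eqref{ReF}. The heavy machinery (radial boundary values of Schur/Carathéodory functions, Fatou's theorem) is standard for measures on the unit circle, so I will quote it rather than reprove it.

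First I would establish (1)$\Leftrightarrow$(2). Starting from the Plancherel identity \eqref{ansqsum},
\begin{equation*}
    \sum_{n=1}^\infty r^{2n}|a_n|^2 = \frac{r^2}{2\pi}\int_{-\pi}^\pi |f(re^{-it})|^2\,dt,
\end{equation*}
I let $r\to1^-$. On the left, monotone convergence gives the limit $R=\sum_n|a_n|^2$. On the right, $|f(re^{-it})|\leq 1$ on the disk and $f$ admits a.e.\ radial boundary values $f(e^{-it})$ on the unit circle, so dominated convergence yields $\frac1{2\pi}\int_{-\pi}^\pi |f(e^{-it})|^2\,dt$. Hence
\begin{equation*}
    R=\frac1{2\pi}\int_{-\pi}^\pi |f(e^{-it})|^2\,dt.
\end{equation*}
Since $|f|\leq 1$ a.e., $R=1$ holds if and only if $|f(e^{it})|=1$ for a.e.\ $t$, which is exactly the statement that $f$ is inner. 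This settles (1)$\Leftrightarrow$(2).

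Next I would prove (2)$\Leftrightarrow$(3). Here I rely on the classical fact, recorded in the paragraph after \eqref{ReF}, that the density of the absolutely continuous part of $\mu$ with respect to normalized Lebesgue measure on $S^1$ equals the radial limit of $\RE F$. Combined with the identity $\RE F(z)=(1-|zf(z)|^2)/|1-zf(z)|^2$ from \eqref{ReF}, this density vanishes a.e.\ on $S^1$ if and only if the radial limit of $|zf(z)|$ equals $1$ a.e., i.e.\ $|f(e^{it})|=1$ a.e., i.e.\ $f$ is inner. So (2)$\Leftrightarrow$(3), and the three statements are equivalent.

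The only real subtlety is the interchange of limit and integral, and the use of the boundary-value theorems for bounded analytic functions and for Herglotz/Carathéodory functions; these are entirely classical (see Simon \cite{Simon1}), so the main obstacle is really bookkeeping: making sure that ``inner'' is interpreted as $|f|=1$ a.e.\ on $S^1$ and that the density of $\mu_{\mathrm{ac}}$ indeed coincides a.e.\ with the radial boundary value of $\RE F$, rather than verifying anything genuinely new.
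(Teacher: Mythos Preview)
Your proof is correct and follows essentially the same route as the paper: the equivalence (1)$\Leftrightarrow$(2) via the Plancherel identity \eqref{ansqsum} and the limit $r\to1^-$, and (2)$\Leftrightarrow$(3) via the identification of the absolutely continuous density with the radial boundary values of $\RE F$ in \eqref{ReF}. You are somewhat more explicit than the paper in naming the convergence theorems (monotone and dominated convergence) that justify the limit interchange, but the argument is the same.
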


\begin{remark}
Theorem \ref{thm:crit} supplements known results about the interplay between the propagation properties of a state $\phi$ and its spectral measure.
For example a corollary of the Riemann-Lebesgue-lemma states that a state with absolutely continuous spectral measure will eventually leave any
finite region,  which corresponds to the statement of Theorem \ref{thm:crit} that a state can only be recurrent and thereby returning to its initial
starting point if its spectral measure contains no absolutely continuous part. In comparison to the famous RAGE theorem \cite{last,RAGE,SimonReedIII}
which characterizes the continuous and the pure point part of the spectral measure of a state by its dynamical properties our result distinguishes
between its absolutely continuous and its singular part.
\end{remark}

\subsection{A spectral view of classical and SJK-recurrence}
\subsubsection{Classical} \label{sec:classpectrum}
Here is brief account of the use of the ``spectral method" to analyze classical random walks. Not all Markov chains with state space $X$ can be
analyzed in this fashion since the matrix $P_{x,y}$ can be very far from being symmetric.

It turns out that the class of Markov chains that can be put in self-adjoint form has a nice description, namely they are the so-called {\it
reversible chains} defined as those for which there is a vector with positive components $\pi_x$ such that
\[
\pi_x P_{x,y} = P_{y,x} \pi_y.
\]
Assuming the existence of this  vector one can introduce an inner product between functions $f$ and $g$ defined on $X$ by the rule
\[
(f,g)_{\pi} = \sum_{x \in X} f_x \overline{g}_x \pi_x.
\]
One now checks easily that $(Pf,g)_{\pi} = (f,Pg)_{\pi}$ which says that the matrix $P$ is now symmetric with respect to such inner product and the
spectral theory for self-adjoint (bounded) operators can be used to great advantage.

From the spectral measure $E$ of our self-adjoint operator one gets a spectral measure $m(d\lambda)=(\psi,E(d\lambda)\psi)_\pi$, supported in
$[-1,1]$, for any probability distribution $\psi_x$. The probability of going back to the same probability distribution $\psi_x$ in $n$ steps is
given by the moments of the spectral measure, namely,
\[
m_n = \int \! m(d\lambda) \; \lambda^n = (\psi,P^n\psi)_\pi, \qquad n=0,1,2,\dots
\]

In particular, the probability $p_n$ of returning to a given initial state $0$ after $n$ steps coincides with the $n$-th moment of the spectral
measure $m$ for the distribution probability $\delta_{x,0}$ concentrated at $x=0$. This identifies the generating function of the return
probabilities $p_n$ as the Stieltjes function of the measure $m$, i.e.,
\[
\widehat{p}(z) = \sum_{n=0}^\infty p_nz^n = \int_{-1}^1 \frac{m(d\lambda)}{1-\lambda z}.
\]

We finally have that the state 0 is recurrent exactly when
\[
\int_{-1}^1 \frac {m(d\lambda)}{1-\lambda} = \infty.
\]
Besides, using the renewal equation, we find that, in the recurrent case, the expected time for the first return to this state is
\[
\tau^{C} = \sum_{n=1}^\infty nq_n = \left.\frac{d}{dz}\widehat{q}(z)\right|_{z=1} = \lim_{z\to1} \frac{1-\widehat{q}(z)}{1-z} = \lim_{z\to1}
\frac{1}{(1-z)\widehat{p}(z)} = \frac{1}{m(\{1\})}.
\]
Therefore, the process returns to a state in a finite expected time exactly when its spectral measure has a mass at $1$, and the expected return time
is the inverse of the corresponding mass.

All this shows that, classically, the recurrence properties of a state only depend on the behavior of its spectral measure around the point 1.

\subsubsection{SJK}
For the sake of comparison, we also give here a simple spectral characterization of SJK-recurrence, which is missing in \cite{Stefanak} and was
formulated in a different way in \cite{CMGVI}. Since
\begin{equation}\label{pnFourier}
    \braket\phi{U^n\phi}=\int\!\mu(du)\ u^n=\mu_n
\end{equation}
are just the Fourier coefficients of the measure, and $\mu_{-n}=\overline{\mu_n}$, the condition
\begin{equation}\label{pnEll2}
    \sum_{n=0}^\infty p_n=\frac12+\frac12\sum_{n=-\infty}^\infty \abs{\mu_n}^2 <\infty
\end{equation}
means, by the Plancherel Theorem, that $\mu(du)=\rho(u)\,du$ for a function $\rho\in\Ell2(S^1,\nu)$, where $\nu$ denotes the uniform measure on the
unit circle $S^1$. Note that since $S^1$ is compact, this is more restrictive than just absolute continuity, i.e., the existence of an integrable
density $\rho$. So we can summarize: {\it A pair $(U,\phi)$ fails to be SJK-recurrent if and only if $\mu$ is absolutely continuous with a square
integrable probability density.}

\subsection{Dependence on the starting point }
One feature, which makes P\'olya's theory of recurrence so useful is that, if the walk is irreducible, recurrence is independent of the starting
point, hence a property of the walk itself. Here ``irreducible'' means \cite{Karlin} that from any point every other can be eventually reached with
non-zero probability, a property which is usually checked quite easily.

In the quantum case the natural analogue of ``the set of points reachable from starting point $x$'' is ``the cyclic subspace generated by the initial
vector $\phi$'', i.e., the span of all $U^n\phi$, which we will briefly denote by $\cyclic\phi$. Now it is clear that in the quantum case no
non-trivial unitary operator can be ``irreducible'', because the spectral decomposition will generate many subspaces which cannot be reached from
each other. Nevertheless, some degree of independence can be stated, and follows immediately from Theorem~\ref{thm:crit}.

\begin{corollary}
  Let $\phi'\in\cyclic\phi$ and $(U,\phi)$ recurrent. Then $(U,\phi')$ is recurrent.
\end{corollary}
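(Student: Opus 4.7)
The plan is to combine Theorem \ref{thm:crit} with a standard measure-theoretic comparison between the spectral measures of $\phi$ and $\phi'$. By the theorem, recurrence of $(U,\phi)$ is equivalent to the spectral measure $\mu=\mu_\phi$ being purely singular with respect to Lebesgue measure on $S^1$. So it suffices to show that the spectral measure $\mu_{\phi'}$ associated with $\phi'$ is again singular.

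First I would pass to the canonical form described just before \eqref{muh}: since $\phi$ is cyclic (or can be assumed so by restricting to $\cyclic\phi$), we may identify $\HH$ with $\HH_\mu=\Ell2(S^1,\mu)$, with $U$ acting as multiplication by $u$ and $\phi$ corresponding to the constant function $1$. Under this identification, $\phi'\in\cyclic\phi$ becomes some function $g\in\Ell2(S^1,\mu)$. The spectral projections $E(A)$ are then multiplications by characteristic functions $\chi_A$, so
\begin{equation*}
    \mu_{\phi'}(A)=\braket{\phi'}{E(A)\phi'}=\int_A \abs{g(u)}^2\,\mu(du).
\end{equation*}
This displays $\mu_{\phi'}$ as having density $|g|^2$ with respect to $\mu$; in particular $\mu_{\phi'}\ll\mu$.

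The final step is the standard observation that absolute continuity of measures passes singularity along: if $\mu$ is singular with respect to Lebesgue measure $\nu$ on $S^1$, then there is a Borel set $N\subset S^1$ with $\mu(N)=0$ and $\nu(S^1\setminus N)=0$. Since $\mu_{\phi'}\ll\mu$, we get $\mu_{\phi'}(N)=0$, so $\mu_{\phi'}$ is also supported on $S^1\setminus N$, which is a Lebesgue-null set. Hence $\mu_{\phi'}$ has no absolutely continuous part, and Theorem \ref{thm:crit} applied to $\phi'$ yields recurrence of $(U,\phi')$.

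There is no real obstacle: the only thing worth checking carefully is the identification of $\phi'$ with an $\Ell2(S^1,\mu)$-function, which is guaranteed because $\cyclic\phi$ is precisely the image of $\Ell2(S^1,\mu)$ under the canonical unitary $V$ constructed from $(VU^n\phi)(u)=u^n$. Everything else is a direct invocation of Theorem \ref{thm:crit} plus elementary measure theory.
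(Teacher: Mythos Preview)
Your argument is correct and is precisely the natural unpacking of what the paper asserts: the paper gives no explicit proof, merely stating that the corollary ``follows immediately from Theorem~\ref{thm:crit}'', and your derivation---identifying $\phi'$ with $g\in\Ell2(S^1,\mu)$, observing $\mu_{\phi'}\ll\mu$, and concluding singularity is inherited---is exactly how that immediate consequence is made rigorous.
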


Note that this is not true for SJK-recurrence. As a counterexample we can take Example~\ref{sec:exampleDifferent}, or any other where $\mu$ has both
pure point and absolutely continuous components. This will be SJK-recurrent, but a suitable initial vector $\phi'$ from the absolutely continuous
subspace will be transient. A very similar argument shows that the above corollary is not true for transience. It is also not true for
SJK-transience: this fails because even if $\phi$ is absolutely continuous with square integrable density, $\cyclic\phi$ will also contain vectors
whose density is not square integrable.

We could also introduce a a notion of ``hereditary transience'' of $(U,\phi)$ meaning that all $\phi'\in\cyclic\phi$ are transient. Clearly, this is
equivalent to $\mu$ being absolutely continuous. For hereditary transience the analogue of of the above corollary holds by definition. However, for
the SJK version this is not an option (hereditary SJK-transience is vacuous). To complete the picture we might add that hereditary SJK-recurrence
means that $\mu$ has no absolutely continuous part, since that part would contain also vectors with square integrable density.  Hence hereditary
SJK-recurrence is exactly equivalent to recurrence in our sense.

\subsection{An inverse problem}
A famous result in the classical case \cite{Yosida} states that any sequence $q_n$ with $q_n\geq0$, $q_0=0$ and $\sum_nq_n=1$ can be the sequence of
first return probabilities of a Markov chain. Here we discuss the analogous questions in the quantum case. It is clear from Bochner's theorem that
the sequences $\mu_n=\braket\phi{U^n\phi}^2$ are exactly the positive definite ones with $\mu_0=1$: the kernel $k_{n,m}=\mu_{n-m}$ has to be positive
definite, where we set $\mu_{-n}=\overline{\mu_n}$ for $n>0$. Equivalently, the sequence of Toeplitz determinants \cite{Simon1} is positive.
Clearly, by expressing the $\mu_n$ in terms of the $a_n$ this implies a hierarchy of conditions on the $a_n$. The first few look encouragingly
simple:
\begin{eqnarray}
  1-\abs{a_1}^2 &\geq&0 \label{toep1}\\
  (1-\abs{a_1}^2)^2-\abs{a_2}^2&\geq&0 \label{toep2}\\
  (1-\abs{a_1}^2)^3-(1-\abs{a_1}^2)(2\abs{a_2}^2+\abs{a_3}^2)+\abs{a_2}^2&\geq& 2\RE\bigl(\overline{a_2}\,a_1a_3\bigr) \label{toep3}
\end{eqnarray}
These already suffice to answer some basic questions. First of all, we see that there is no quantum analog of the Yosida-Kakutani Theorem cited
above: there are non-trival constraints on the sequence of first return probabilities. From \eqref{toep2}, in particular, we see that the set of
admissible sequences $\abs{a_n}^2$ is not convex. Finally, the Toeplitz determinant constraints cannot be expressed in terms of the absolute values
of the $a_n$ alone. So even from the whole hierarchy of such inequalities it is not trivial to extract the precise constraints for the $\abs{a_n}$.
\section{Expected return time in the recurrent case}\label{sec:time}
\subsection{It is an integer}\label{sec:integer}

Consider a recurrent pair $(U,\phi)$. Its first arrival probabilities $\abs{a_n}^2$ are then a normalized probability distribution on $\Nl$. What are
its features? Perhaps the first quantity to look at is the expected return time, i.e.,
\begin{align}\label{def:matime}
  \tau = \sum_{n=1}^\infty \abs{a_n}^2\, n.
\end{align}
It turns out that $\tau$ can be computed from the measure very easily and turns out to be always an integer or infinite. This is surprising, because
the $a_n$ clearly depend continuously on the measure $\mu$, so a small change in $\mu$ could be expected to also change $\tau$ only a little. Our aim
is therefore not just to prove this statement but to make it intuitive by giving $\tau$ a topological interpretation.

\begin{theorem}\label{thm:time}
Let $(U,\phi)$ be a recurrent pair with spectral measure $\mu$, Schur function $f$ and expected return time $\tau\in\Rl_+\cup\{\infty\}$. Then the
following are equivalent
  \begin{itemize}
    \item[(1)] $\tau<\infty$.
    \item[(2)] $f$  is a rational function.
    \item[(3)] $\mu$ is equal to the sum of $n<\infty$  distinct point measures with non-zero weights.
  \end{itemize}
Moreover, $\tau=n$, and the polynomial degree of numerator and denominator of the Schur function is $n-1$.
\end{theorem}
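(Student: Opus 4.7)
The plan is to identify $\tau$ with the winding number of the inner function $\hata(z)=z\overline f(z)$ around the unit circle, which forces $\tau$ to be a non-negative integer or $+\infty$, and then to translate finiteness of this winding number into the rationality of $f$ and the finiteness of the support of $\mu$.

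By Theorem~\ref{thm:crit}, recurrence means $f$, and hence $\hata$, is inner, so $|\hata(e^{\ic\theta})|=1$ for almost every $\theta$. For $r<1$, Parseval yields
\begin{equation*}
  \tau(r) := \sum_{n\ge 1} n\,|a_n|^2\, r^{2n}
     = \frac{1}{2\pi\,\ic}\int_0^{2\pi}
         \overline{\hata(re^{\ic\theta})}\,\partial_\theta \hata(re^{\ic\theta})\,d\theta,
\end{equation*}
and the identity $\overline{\hata}\,\partial_\theta\hata=|\hata|^2\,\partial_\theta\log\hata$ together with $|\hata(re^{\ic\theta})|\to 1$ a.e. suggests that the right-hand side tends as $r\to 1^-$ to
\begin{equation*}
  W(\hata) = \frac{1}{2\pi\,\ic}\int_0^{2\pi} \partial_\theta\log \hata(e^{\ic\theta})\,d\theta,
\end{equation*}
the winding number of $\hata|_{S^1}$. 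Since Abel's theorem identifies $\lim_{r\to 1^-}\tau(r)$ with $\tau$, I expect $\tau=W(\hata)\in\Nl\cup\{\infty\}$. To close the equivalence (1)$\Leftrightarrow$(2) I would invoke the classical fact that an inner function lies in the Hardy--Sobolev space $H^{1/2}$, i.e.\ satisfies $\sum n|a_n|^2<\infty$, if and only if it is a finite Blaschke product; $\hata$ is a finite Blaschke product exactly when $f$ is a rational inner function, yielding (2).

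For (2)$\Leftrightarrow$(3) I would work through the Carath\'eodory function. If $f$ is a finite Blaschke product of degree $n-1$, then $F(z)=(1+zf(z))/(1-zf(z))$ is rational of degree $n$; because $\RE F=0$ a.e.\ on $S^1$ (since $|zf|=1$ there) the absolutely continuous and singular continuous parts of $\mu$ must vanish, so $\mu$ is pure point, supported on the $n$ solutions of $1-zf(z)=0$ on $|z|=1$, with strictly positive weights read off from $\mu(\{u\})=\lim_{r\to 1^-}\frac{1-r}{2}F(ru)$. Conversely, when $\mu=\sum_{k=1}^n w_k\delta_{u_k}$ with distinct $u_k$ on $S^1$, a direct computation shows $F$ is rational of degree $n$, and a degree chase through the M\"obius map $(F-1)/(F+1)$ and the subsequent division by $z$ gives that $f$ is a rational inner function with numerator and denominator of exact degree $n-1$. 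The identification $\tau=n$ then follows from Step~1: the winding number of a finite Blaschke product equals its degree, and $\hata=z\overline f$ has degree $n$ (the extra zero at the origin comes from the factor $z$).

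The main obstacle I foresee is justifying the exchange of limit and integral in Step~1 when $\tau$ is not a priori finite. I intend to sidestep it by handling the two cases separately: for finite Blaschke products the computation of $\tau$ is explicit and algebraic, while for any inner function with a non-trivial singular inner or infinite Blaschke factor I would argue directly from the $H^{1/2}$ characterization cited above that $\sum n|a_n|^2=\infty$, bypassing the boundary winding-number computation altogether.
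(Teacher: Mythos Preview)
Your outline is correct. The winding-number identification $\tau=w(\hata)$ and the reading $\tau=n$ from the Blaschke degree are exactly what the paper does, and your (2)$\Leftrightarrow$(3) via the Carath\'eodory function is the paper's computation run in the opposite order. The genuine divergence is in the hard direction (1)$\Rightarrow$(2): you cite the theorem that an inner function lies in the Dirichlet space $\{g:\sum n|\widehat g(n)|^2<\infty\}$ iff it is a finite Blaschke product, whereas the paper proves this implication from scratch. From the identity $\tau(f)=1+\lim_{r\to1^-}(2\pi)^{-1}\int(1-|f(re^{it})|^2)/(1-r^2)\,dt$ the paper derives the product inequality $\tau(f_1f_2)\ge\tau(f_1)$, which immediately gives $\tau=\infty$ for any infinite Blaschke product; the remaining case of a non-Blaschke inner function is handled by Frostman's theorem (for almost every $\alpha$ the M\"obius shift $f_\alpha=(\alpha-f)/(1-\overline\alpha f)$ is a Blaschke product) combined with the two-sided estimate $1-|f|^2\asymp 1-|f_\alpha|^2$. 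Your route is shorter if the Dirichlet-space characterization is taken as known, but that characterization is itself a nontrivial result of roughly the same depth as Frostman's, so neither argument is truly more elementary; the paper's version has the minor advantage of keeping the mechanism ($\tau$ literally counts Blaschke factors) visible rather than packaged inside a citation.
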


We will prove the implications (3)$\Rightarrow$(2)$\Rightarrow$(1) emphasizing the narrative of ideas leading to the identification $\tau=n$. The
slightly more technical implication (1)$\Rightarrow$(3) is given in a separate, slightly more formal style.

It is clear from \eqref{muh} that $\muh=P/Q$ is the quotient of two polynomials $P,Q$ of degrees $(n-1),n$, respectively. Then we can write $f$ as
$R/P$ with $R=(P-Q)/z$. $R$ is a polynomial, because $P(0)/Q(0)=1$, and hence $P-Q$ has no constant term. Clearly, its degree is also $n-1$. $R$ and
$P$ have no common zeros, since these would be common zeros of $P$ and $Q$, contradicting the observation that $\muh$ has $n$ distinct poles on the
unit circle. This proves (2).

Now observe that $f$ is a rational function which, as an inner function, must have modulus one on the unit circle, and hence the generating function
\eqref{hata} of the arrival amplitudes
\begin{equation}\label{gg}
    g(t)=e^{it}\overline f(e^{it})=\sum_{n=1}^\infty a_n e^{int}
\end{equation}
has modulus one for all real $t$. So $g(t)$ winds around the origin an integer number $w(g)$ of times as $t$ goes from $0$ to $2\pi$. Locally we can
write $g(t)=\exp(i\gamma(t))$, so the angular velocity is
\begin{equation}
   \partial_t\gamma(t)=\frac{\partial_t g(t)}{ig(t)}=\overline{g(t)}\ \frac1i\,\partial_tg(t).
\end{equation}
Integrating this over one period $t\in[0,2\pi]$, we get $2\pi w(g)$, so
\begin{equation}
   w(g) = \frac1{2\pi}\int_0^{2\pi}\mskip-15mu dt\ \overline{g(t)} \ \frac1i\,\partial_t g(t)
        = \sum_{n=0}^\infty \overline{a_n}\ (na_n)= \tau.
\end{equation}
Here we used the Plancherel identity and the observation from \eqref{gg} that the Fourier coefficients of $(\partial_tg)/i$ are $na_n$. Hence we have
a topological interpretation of the expected return time $\tau$, which already makes clear that it must be an integer. We remark that this connection
was found in a totally different context by H.~Brezis following a hint by I.~M.~Gelfand \cite{Brezis}.

To determine this integer we use the Blaschke factorization, which decomposes a rational inner function into a product of the form
\begin{equation}\label{BlaschkeRat}
    f(z)=\beta\prod_{k=1}^{n-1}\frac{\alpha_k-z}{1-\overline{\alpha_k}\,z}
\end{equation}
with $\abs{\alpha_k}<1$ and $\abs\beta=1$. The $\alpha_k$ are exactly the zeros of $f$ inside the unit circle, which by the reflection principle must
be matched by poles at $1/\overline{\alpha_k}$ outside the circle. The upper limit of the product expresses our previous analysis of the degrees of
the rational function $f$. Now the winding number of a product of several functions is clearly the sum of the winding numbers. Each of the Blaschke
factors contributes $1$. Indeed, the winding number of a single factor clearly depends continuously on $\alpha$, so must be constant, and we may set
$\alpha=0$ (i.e., $f(z)=z$) to evaluate it.  Observing that the generating function \eqref{hata} contains an additional factor $z$ we get
$\tau=w(g)=n$, proving (1).

Before proving the remaining implication, let us see how a small additional point mass changes the Schur function and the arrival time distribution.
An example is shown in Fig.~\ref{fig:phasetwist}. It is based on the sum of two point masses at $u=1$ and $u=i$. We add a small point mass at $u=-1$
by convex combination with a small weight $\varepsilon$. This leads to an additional zero of $f$ close to $z=-1$, and a full turn of the phase near
that point. The interval in which this turn happens is of order $\varepsilon$. Thus for fixed low $n$ the contribution to the Fourier integral for
$a_n$ is negligible. For higher $n$ it becomes visible, and the unit contribution to the expected return time comes from the large $n$ Fourier
coefficients. We will discuss this in more detail in Sect.~\ref{sec:finite}.

\begin{figure}[ht]
\centering
 \includegraphics[width=12cm]{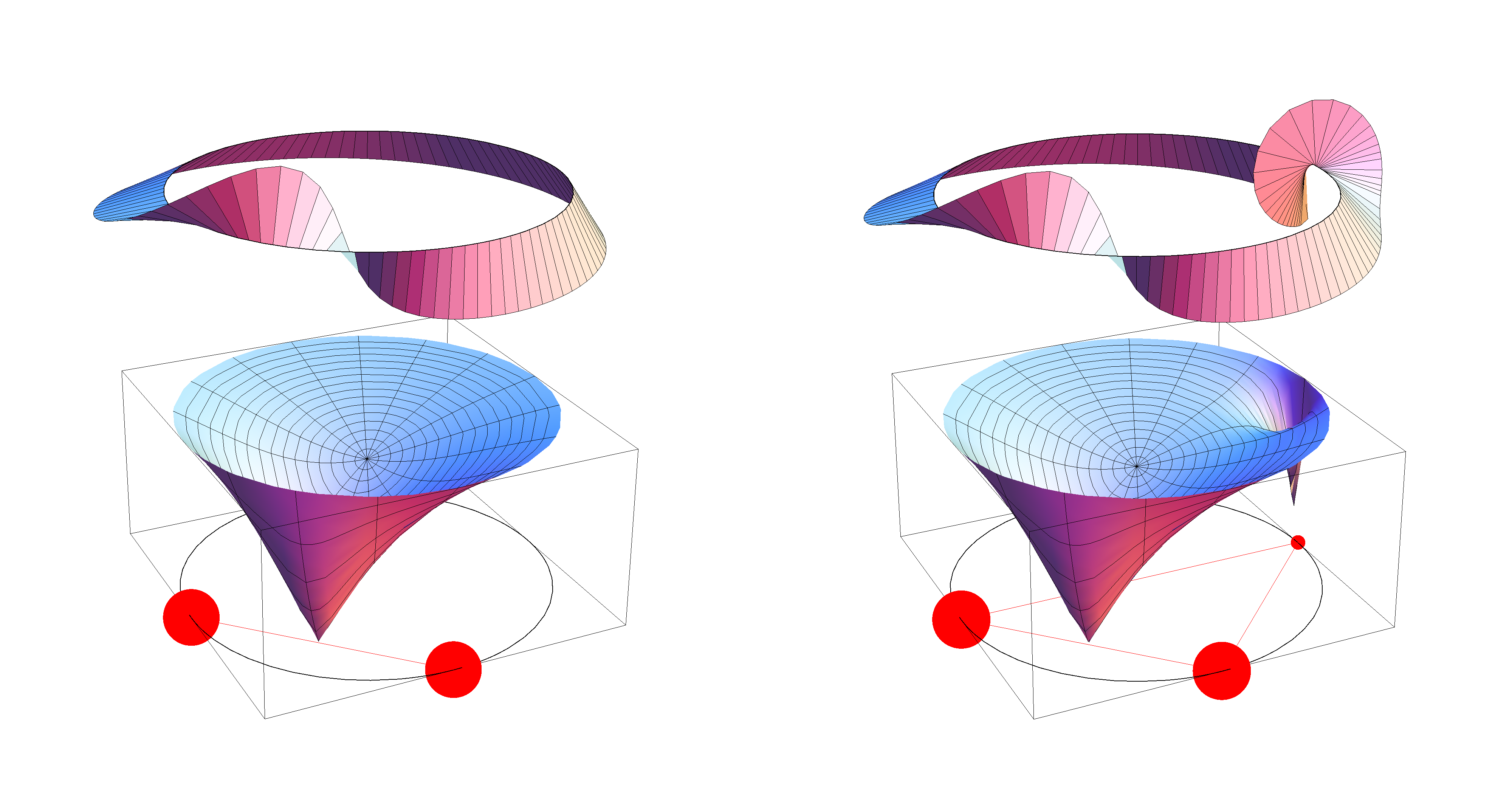}
  \caption{Schur function for a sum of two and three point measures. Bottom: unit circle with positions of the masses, represented according to weight.
  Middle: absolute value of $f$. Top: the phase of $f$ on the unit circle. The ribbon is traced out by a vector indicating the argument of $f$.
  It has fixed length, is based on a unit circle (thick black) and orthogonal to it.
  Left panel: two point masses at $1$, $i$ with equal weight. Right: same mixed with a point mass near $-1$ with weight $0.01$.
  The narrow spike near the light point mass does touch the zero plane: it is a zero of $f$ near the circle with large derivative. }
  \label{fig:phasetwist}
\end{figure}

\begin{proof}[of Theorem~\ref{thm:time}, part (1)$\Rightarrow$(3)]
We prove this by contradiction, i.e., we show that for non-rational inner functions $f$ we have $\tau=\infty$. We combine the expression
\eqref{ansqsum} with the observation that
\begin{equation}\label{derivr2}
    \frac{1-r^{2n}}{1-r^2}=\sum_{k=0}^{n-1} r^{2k}
\end{equation}
increases monotonely to $n$ as $r\to1^-$. Hence,
\begin{eqnarray}
    \tau(f)=\lim_{r\to1^-}\sum_{n=0}^\infty\frac{1-r^{2n}}{1-r^2}\abs{a_n}^2
          &=&\lim_{r\to1^-}\frac{1}{2\pi}\int_{-\pi}^{\pi}\!\!dt\ \frac{1-r^2\abs{f(re^{it})}^2}{1-r^2} \label{taulimit} \\
          &=&1+\lim_{r\to1^-}\frac{1}{2\pi}\int_{-\pi}^{\pi}\!\!dt\ \frac{1-\abs{f(re^{it})}^2}{1-r^2}. \label{taulimit2}
\end{eqnarray}
We now consider a product of inner functions $f_1,f_2$, which is, of course, again an inner function.
Then $\abs{f_1f_2}\leq\abs{f_1}$ and, consequently
\begin{equation}
    \tau(f_1f_2)\geq\tau(f_1).
\end{equation}
Suppose now that $f$ can be represented as a convergent Blaschke product
\begin{equation}\label{BlaschkeInf}
    f(z)=\beta\prod_{k=1}^{\infty}\frac{\overline{\alpha_k}}{\abs{\alpha_k}}\ \frac{\alpha_k-z}{1-\overline{\alpha_k}\,z},
\end{equation}
where the phase factors in the product have been introduced to make each factor positive at $z=0$ thus avoiding erratic behavior of conventional
phases. Clearly, we can take out the first $N-1$ factors as $f_1$, for which we know $\tau(f_1)=N$. Hence $\tau(f)\geq N$ for all $N$, so
$\tau(f)=\infty$.

Not every inner function is a Blaschke product, but we can invoke a result of Frostman \cite{Frostman} (see also \cite[Thm~6.4.]{Garnett}). It states
that, given an inner function $f(z)$, its M\"{o}bius transform
\begin{equation}\label{Moeboff}
    f_\alpha(z) := \frac{\alpha-f(z)}{1-\overline{\alpha}f(z)}
\end{equation}
is a Blaschke product for some $\alpha$ in the unit disk. Indeed, Frostman proved that this statement holds for almost all $\alpha$ (outside a set of
null capacity), but one $\alpha$ will be enough for us. Since $f(z)=({\alpha-f_\alpha(z)})/({1-\overline{\alpha}f_\alpha(z)})$, the two functions
satisfy the estimate
\begin{equation}
  1-|f(z)|^2
    = \frac{1-|\alpha|^2}{|1-\overline{\alpha}f_\alpha(z)|^2}\,(1-|f_\alpha(z)|^2)
    \geq \frac{1-|\alpha|}{1+|\alpha|}\,(1-|f_\alpha(z)|^2).
\end{equation}
Hence from \eqref{taulimit2},
\begin{equation}\label{deltatauMoeb}
    \tau(f) \geq 1 + \frac{1-|\alpha|}{1+|\alpha|}\, (\tau(f_\alpha)-1).
\end{equation}
Now $f_\alpha$ is a Blaschke product. If it is finite, then $f$ is rational and hence itself a finite Blaschke product, a case already covered. When
it is an infinite product, then $\tau(f_\alpha)=\infty$ and hence $\tau(f)=\infty$.
\end{proof}

\subsection{Expected arrival in the SJK scheme}\label{sec:taudoof}
Although absent in SJK papers, a definition of expected return time can be given for the SJK approach too. Let us begin with a seemingly natural
approach which, however, turns out not to work. Recall P\'olya's Theory from Sect.~\ref{sec:rec}: it allowed the determination of the probabilities
$q_n$ for first return in $n$ steps from the probabilities $p_n$ to be back at time $n$. Since the SJK-approach is based on applying P\'olya's
criterion to the $p_n$ as computed from quantum mechanics, we already have a candidate for $q_n$ and hence the expected return time via the renewal
equation \eqref{qhat}. We get (compare also Sect.~\ref{sec:classpectrum})
\begin{equation}\label{taudoof}
    \taudoof
    =\left.\frac{d\widehat q(z)}{dz}\right|_{z=1}=\lim_{z\to1}\frac{1-\widehat q(z)}{1-z}
    =\left(\lim_{z\to1}(1-z)\widehat p(z)\right)^{-1}.
\end{equation}
If the measure consists of finite point masses the $\mu_n$ are given by $\sum_k m_k \exp(i\theta_k n)$ and we can further evaluate \eqref{taudoof} to
\begin{align}\label{taudoof_2}
  \taudoof = \left(\lim_{z\to1}\sum_{k,l} m_k m_l \frac{1-z}{1-z e^{i(\theta_k-\theta_l)}}\right)^{-1}=\left(\sum_k m_k^2\right)^{-1}.
\end{align}
We hasten too add that in the quantum case $\taudoof$ has no operational meaning at all. What is worse, the ``first return probabilities'' $q_n$ on
which it is based need not be probabilities at all (see Example \ref{ex:Rotation}).

Therefore, in the spirit of \cite{Stefanak}, we build a notion of expected return time on the process described after \eqref{SJK}: It is a Markov
process with state space $0,1,\ldots$, such that from a state $n>0$ one goes to $0$ with probability $p_{n}$, and to $n+1$ otherwise. The probability
for first return to $0$ in $n$ steps and the corresponding expected return time are thus
\begin{eqnarray}\label{qSJK}
   \qSJK_n&=&p_n\ \prod_{k=1}^{n-1}(1-p_k), \nonumber\\
   \tauSJK&=& \sum_n n\,\qSJK_n,
\end{eqnarray}
with $\qSJK_0=0$ and $\qSJK_1=p_1$. As discussed above, this adds up to $\sum_n\qSJK_n=\RSJK$. For a given pair $(U,\phi)$ we thus get three sets of
``first return probabilities'': the $\abs{a_n}^2$ with the first return amplitudes $a_n$ from \eqref{an}, and two sets based on the probabilities
$p_n=\abs{\braket\phi{U^n\phi}}^2$, namely $\qSJK_n$ after \eqref{qSJK} and $q_n$ after Sect.~\ref{sec:classNotion}.  The following example shows
that all three notions can actually lead to different values for the same dynamical system.

\begin{example}\label{ex:Rotation}
In the Hilbert space $\Cx^2$ consider as  the unitary operator $U$ the real rotation matrix by some angle $\theta$ and choose the vector $\phi=(1,0)$
as the initial state. The corresponding spectral measure consists of two mass points of equal weights $1/2$ at $\exp(\pm i \theta)$, which
corresponds to the Schur function
  \begin{align}
    f(z) = \frac{\cos(\theta)-z}{1-\cos(\theta) z}\; .
  \end{align}
Given the Schur function, we can calculate the three series of first return probabilities $\abs{a_n}^2$, $\qSJK_n$ and $q_n$. In Fig.~\ref{fig:qqq} these three are
shown together for $\theta=\pi/4$. In that particular case $U^2\phi=(0,1)$, so $\mu_2=\braket\phi{U^2\phi}=0$, and $p_n=1,1/2,0,1/2,\ldots$ repeated
ad infinitum. Now this cannot be the return probability sequence of any Markov chain: If there is a non-zero probability to stay at site $0$ for one
step, this probability must be non-zero for all $n$. So $p_2=0$ is impossible classically. In the jargon of quantum information theory the unitary we
have chosen is called a square root of the {\tt NOT} gate \cite{sqnot}, since $U^2$ swaps the two bit values, a feat which cannot be realized by two
equal classical probabilistic steps. Indeed, $q_2=-1/4$, and the sum of all the negative $q_n$ is around $-0.77$.
  \begin{figure}[ht]
\centering
 \includegraphics[width=0.8\textwidth]{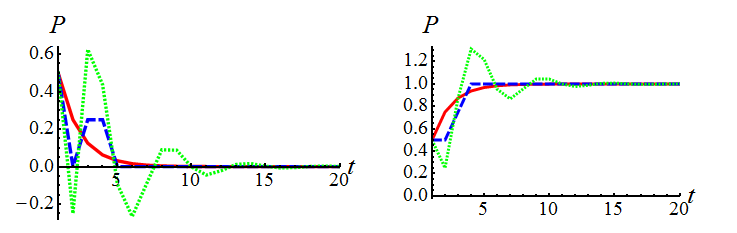}
  \caption{The figure shows the first return probabilities (left panel) and the accumulated first return probabilities (right panel) for $\theta=\pi/4$.
  The three curves correspond to the first return probabilities given by $\abs{a_n}^2$ (thick), $\qSJK_n$ (long dashed) and  $q_n$ (short dashed).}
  \label{fig:qqq}
\end{figure}
It is also clear from Fig.~\ref{fig:qqq} that there is no general inequality connecting the first returns in the SJK approach and in our approach:
which of the two gives higher probability for the event ``arrival before $t$'' may depend on $t$.

\begin{figure}[ht]
\centering
 \includegraphics[width=0.5\textwidth]{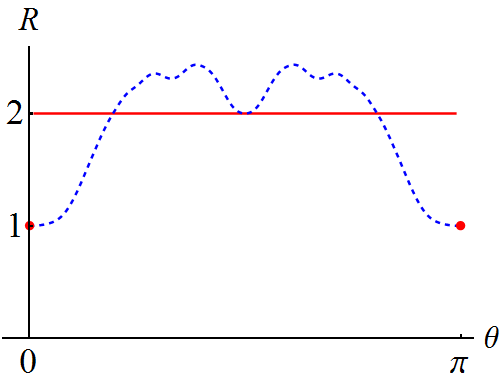}
  \caption{Dependence of the expected return times  $\tau=\taudoof$ (thick) and $\tauSJK$ (dashed, first 400 summands) on the rotation angle $\theta$ (for $\theta \in\left[-\pi,0\right]$ the graph shows the same behavior).}
  \label{expectedreturn}
\end{figure}

To see that this is also true for the expected return times we evaluate the expressions for $\tau$, $\taudoof$ and $\tauSJK$ for this example. Since
the overlap of $\phi=(1,0)$ with the two eigenvectors of $U$ is independent of the rotation angle $\theta$, Theorem \ref{thm:time} tells us that
$\tau$ equals the number of distinct eigenvalues of $U$. This amounts to $\tau=2$ for all $\theta$ except the singular cases $0$ and $\pm\pi$, where
$U$ has two equal eigenvalues and therefore $\tau$ is $1$.  Since both mass points have equal weights $1/2$ equation \eqref{taudoof_2} for $\taudoof$
gives the same result.

To determine $\tauSJK$ we have to evaluate \eqref{qSJK} which, since $p_n=\cos^2(n\theta)$ holds, amounts to
  \begin{align*}
  \tauSJK = 1 + \sum_{n\geq1}\prod_{k=1}^n\sin^2(k\theta).
  \end{align*}

Fig.~\ref{expectedreturn} shows the two curves of the three expected return times $\tau$, $\taudoof$ and $\tauSJK$ depending on the rotation angle.
By varying $\theta$, we find that $\tau<\tauSJK$ as well as $\tauSJK<\tau$ can be realized. In addition by choosing an initial state $\phi$ that does
not have equal overlap with both eigenvectors also $\taudoof<\tauSJK$ can be realized. \hfill$\square$
\end{example}

We have seen that a single point measure contribution in $\mu$ makes a process SJK-recurrent. The following proposition makes an even stronger
statement.

\begin{proposition}
Let $\mu$ be a measure on the unit circle $S^1$ with at least one atom. Then $\tauSJK<\infty$.
\end{proposition}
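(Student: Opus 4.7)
My strategy is to rewrite $\tauSJK$ using the standard tail-sum representation of the expectation of a nonnegative integer-valued random variable, and then apply Wiener's classical theorem on Fourier coefficients of measures on the circle. From the Markov chain description leading to \eqref{qSJK}, the probability that no return has yet occurred after $n$ steps equals $\prod_{k=1}^n(1-p_k)$, where $p_k=|\mu_k|^2$. Hence
$$\tauSJK = \sum_{n=0}^{\infty}\prod_{k=1}^{n}(1-p_k),$$
with the empty product read as $1$. Using the elementary inequality $1-p \leq e^{-p}$, it is enough to show that the partial sum $S_n := \sum_{k=1}^n p_k$ grows at least linearly in $n$, because then $\prod_{k=1}^n(1-p_k) \leq e^{-S_n}$ decays geometrically and the series converges.

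The linear growth is precisely what Wiener's theorem delivers:
$$\lim_{N\to\infty}\frac{1}{2N+1}\sum_{n=-N}^{N}|\mu_n|^2 = \sum_{\omega\in S^1}\mu(\{\omega\})^2.$$
Since $|\mu_{-n}|=|\mu_n|$, the symmetric Cesàro average is essentially twice the one-sided average, so $S_N/N \to W := \sum_\omega \mu(\{\omega\})^2$. By hypothesis $\mu$ carries at least one atom of some positive mass $m$, hence $W\geq m^2>0$. Therefore there is some $N_0$ for which $S_n \geq Wn/2$ whenever $n\geq N_0$, and
$$\sum_{n\geq N_0}\prod_{k=1}^n(1-p_k) \leq \sum_{n\geq N_0} e^{-Wn/2} <\infty,$$
while the first $N_0$ terms contribute only a finite amount. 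This gives $\tauSJK<\infty$.

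The step needing the most care is the invocation of Wiener's theorem: one has to confirm that passing from the symmetric Cesàro mean to the one-sided one is harmless, which it is because $|\mu_{-n}|=|\mu_n|$ simply duplicates each term. A minor technicality is the possibility that $p_k=1$ for some $k$: in that case one factor in the product vanishes, the product is identically zero from that index on, and the series converges trivially, so no separate argument is needed. Everything else is routine once the tail-sum formula and the Wiener limit are in hand.
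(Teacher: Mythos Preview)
Your proof is correct and follows essentially the same line as the paper's: both rewrite $\tauSJK$ as (or bound it by) the sum $\sum_{n\geq0}\prod_{k=1}^n(1-p_k)$ of survival probabilities, apply the inequality $1-p\leq e^{-p}$, and then invoke Wiener's theorem to get linear growth of $\sum_{k\leq n}|\mu_k|^2$ and hence geometric decay of the summands. The only cosmetic difference is that the paper reaches the survival-probability sum via an explicit partial summation, whereas you cite the tail-sum identity for the expectation directly.
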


\begin{proof}
Let $s_n=\prod_{k=1}^n(1-p_k)$ be the survival probabilities in the SJK-approach. Then $q_n+s_{n}=s_{n-1}$ and, by partial summation,
\begin{equation}\nonumber
    \sum_{n=1}^mnq_n=\sum_{n=0}^{m-1}(n+1)s_n-\sum_{n=1}^mns_n
      =\sum_{n=0}^{m-1}s_n -ms_m\leq\sum_{n=0}^{m-1}s_n.
\end{equation}
Using the inequality $e^x \geq 1+x$, $x\in\Rl$, we find
\begin{equation}\nonumber
    \tauSJK \leq 1 + \sum_{n\geq1} \prod_{k=1}^n(1-|\mu_k|^2)
        \leq 1 + \sum_{n\geq1} e^{-\sum_{k=1}^n|\mu_k|^2}.
\end{equation}
Wiener's theorem on the unit circle states that \cite{Simon2}
\[
\lim_{n\to\infty} \frac{\sum_{k=1}^n|\mu_k|^2}{n} = \lim_{n\to\infty} \frac{\sum_{k=-n}^n|\mu_k|^2}{2n+1} = \mu_{pp}(S^1),
\]
where $\mu_{pp}$ is the pure point part of the spectral measure. Hence, the presence of mass points implies that
\[
\frac{\sum_{k=1}^n|\mu_k|^2}{n} \geq M > 0, \qquad n \geq 1,
\]
for some constant $M$. As a consequence,
\[
\tau^{SJK} \leq 1 + \sum_{n\geq1} e^{-Mn} < \infty.
\]
\end{proof}
\section{The finite case}\label{sec:finite}
In this section we consider a finite dimensional system, so the measure $\mu$ is a sum of point measures at positions $u_i$ with weight $m_i$,
$i=1,\ldots,n$. Thus
\begin{equation}\label{muhfin}
   \muh(z)=\sum_{i=1}^n\frac{m_i}{1-u_iz},
\end{equation}
and the Schur function $f$ is a finite Blaschke product \eqref{BlaschkeRat}. Our plan in this section is to give a more detailed account of the
changes in the return probability when $n$ changes and hence the expected return time $\tau=n$ jumps. In particular, we will show that at these
points the variance of the first return diverges. We begin with a formula for the variance in terms of the zeros $z_j$ of $f$.

\begin{proposition}\label{variancefin}
Let $f(z)=\sum_k\overline{a_k}z^{k-1}$ be a rational Schur function with zeros $z_1,\ldots,z_{n-1}$. Then
\begin{equation}\label{var}
    \vartau=\sum_kk^2\abs{a_k}^2-\Bigl(\sum_kk\abs{a_k}^2\Bigr)^2=\sum_{j,\ell}\frac{2\,z_\ell\,\overline{z_j}}{1-z_\ell\,\overline{z_j}}.
\end{equation}
\end{proposition}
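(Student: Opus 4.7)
The plan is to reduce the computation to a Fourier/Plancherel identity and then exploit the Blaschke factorization \eqref{BlaschkeRat} to express everything in terms of the zeros $z_j = \alpha_j$ of $f$. By Theorem~\ref{thm:time}, $\tau = n$, so it suffices to evaluate $\sum_k k^2 |a_k|^2$ and subtract $n^2$. Since the generating function $g(t) := \hat a(e^{it}) = e^{it}\overline f(e^{it}) = \sum_k a_k e^{ikt}$ has $e^{it}$-Fourier coefficient $a_k$, Plancherel applied to $g'(t)/i$ (whose Fourier coefficients are $ka_k$) gives $\sum_k k^2|a_k|^2 = (2\pi)^{-1}\int_0^{2\pi}|g'(t)|^2\,dt$.

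Since $f$ is a rational inner function, $|g(t)| = 1$ a.e., so the logarithmic derivative $w(t) := -ig'(t)/g(t)$ is real-valued and $|g'(t)|^2 = w(t)^2$. A direct computation starting from $\log g(t) = it + \log \overline f(e^{it})$ and the Blaschke factorization produces one contribution per factor; the key simplification is the identity $(1 - \alpha_k e^{it})(\overline{\alpha_k} - e^{it}) = -e^{it}|1-\alpha_k e^{it}|^2$, which turns each factor's contribution into the Poisson kernel $P_{z_k}(t) := (1-|z_k|^2)/|1-z_k e^{it}|^2$. Together with the constant $1$ from differentiating $it$, one obtains
\begin{equation*}
    w(t) = 1 + \sum_{\ell=1}^{n-1} P_{z_\ell}(t),
\end{equation*}
and the identity $\tau = (2\pi)^{-1}\int w\,dt = 1 + (n-1) = n$ is recovered as a sanity check (each Poisson kernel has unit average).

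Squaring $w$ and integrating, the cross-terms require knowing $(2\pi)^{-1}\int P_{z_j}P_{z_\ell}\,dt$. Using the explicit Fourier series of the Poisson kernel (its $n$-th coefficient is $z^n$ for $n\geq 0$ and $\overline z{}^{-n}$ for $n<0$), Plancherel gives
\begin{equation*}
    \frac{1}{2\pi}\int_0^{2\pi} P_{z_j}(t)P_{z_\ell}(t)\,dt = 1 + \frac{z_\ell\overline{z_j}}{1-z_\ell\overline{z_j}} + \frac{\overline{z_\ell}z_j}{1-\overline{z_\ell}z_j}.
\end{equation*}
After summing over $j,\ell\in\{1,\ldots,n-1\}$, a relabeling $j\leftrightarrow\ell$ identifies the two off-diagonal series, producing the factor $2$ in the claimed formula. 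Assembling $\sum_k k^2|a_k|^2 = 1 + 2(n-1) + (n-1)^2 + 2\sum_{j,\ell}z_\ell\overline{z_j}/(1-z_\ell\overline{z_j})$, the constants collapse to $n^2$, which cancels $\tau^2$, leaving exactly the right-hand side of \eqref{var}.

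The routine bookkeeping is straightforward; the main subtleties are sign/conjugation conventions in the logarithmic derivative of $\overline f$ (ensuring one really obtains positive Poisson kernels rather than sign-indefinite terms), and checking that $\sum_{j,\ell} z_\ell\overline{z_j}/(1-z_\ell\overline{z_j})$ is automatically real, so that the two Fourier contributions collapse to a single sum with coefficient $2$. I also use implicitly that the zeros of $f$ inside the disk are precisely the $\alpha_k$ in \eqref{BlaschkeRat} — this is immediate from the Blaschke form and the fact that $|1-\overline{\alpha_k}z|\neq 0$ for $|z|\leq 1$.
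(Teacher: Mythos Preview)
Your argument is correct. Both your proof and the paper's share the same architecture: express $\sum_k k^2|a_k|^2$ via Plancherel as the $L^2$-norm of a derivative, use $|f|=1$ on $S^1$ to replace this by the logarithmic derivative, and exploit the Blaschke factorization \eqref{BlaschkeRat} to decompose that logarithmic derivative into a sum of $n-1$ elementary pieces. The proofs diverge only in how the resulting integral of a squared sum is evaluated.

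The paper keeps the partial-fraction form $f'/f=\sum_j\bigl(-1/(z_j-z)+\overline{z_j}/(1-\overline{z_j}z)\bigr)$, writes $\int|f'/f|^2$ as a contour integral $\oint dz/(2\pi i z)\,|\cdots|^2$, and evaluates term by term via residues; it also shifts the index so that the mean becomes $n-1$ rather than $n$, which shortens the final bookkeeping. You instead recognize that on the circle each Blaschke factor contributes exactly the Poisson kernel $P_{z_j}(t)=(1-|z_j|^2)/|1-z_je^{it}|^2$ to the real angular velocity $w(t)$, and then compute $\int P_{z_j}P_{z_\ell}$ by Parseval from the explicit Fourier series of the Poisson kernel. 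Your route is a bit more conceptual (the Poisson kernel and its unit mean make the sanity check $\tau=n$ transparent, and the cross-term formula drops out of a geometric series), while the paper's residue calculation is more hands-on but avoids invoking the Poisson kernel as an auxiliary object. Both lead to the same intermediate identity $\sum_k k^2|a_k|^2=\sum_{j,\ell}(1+z_\ell\overline{z_j})/(1-z_\ell\overline{z_j})$ (up to the index shift), from which subtracting the squared mean gives \eqref{var}.
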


\begin{proof}
We took the connection between $f$ and the $a_n$ from our generating function relation \eqref{hata}. For the sake of this proof, however, it is
easier to omit the complex conjugate, which is anyhow irrelevant in \eqref{var}, and to shift the sequence $a_n$ so that the $a_n$ are really the
Taylor coefficients of $f$. Note that $\sum_kk\abs{a_k}^2=n-1$ is then the number of zeros.

From the Blaschke product \eqref{BlaschkeRat} we get
\begin{equation}\label{fprimeoverf}
    \frac{f'}f=\sum_j\Bigl(\frac{-1}{z_j-z}-\frac{-\overline{z_j}}{1-\overline{z_j}\,z}\Bigr).
\end{equation}
Since $\abs f=1$ on the unit circle, we can express the second moment as
\begin{eqnarray}
    \sum_kk^2\abs{a_k}^2
      &=&\int_0^{2\pi}\!\!\frac{d\theta}{2\pi}\ \left|\frac{df(e^{i\theta})}{d\theta}\right|^2
       = \int_0^{2\pi}\!\!\frac{d\theta}{2\pi}\ \left|ie^{i\theta}\frac{f'}{f}\right|^2\nonumber\\
      &=&\oint\frac{dz}{2\pi i z}  \left|\sum_j\frac{1}{z-z_j}+\frac{\overline{z_j}}{1-\overline{z_j}\,z}\right|^2,
\end{eqnarray}
where the circle indicates complex integration around the unit circle. Writing the absolute value as $\overline X\,X$ we combine the factor
$\overline X$ with the factor $1/z$. Then, using $\abs z^2=1$, all denominators are linear in $z$. This gives
\begin{equation}
    \sum_kk^2\abs{a_k}^2=\sum_{j,\ell}\oint\frac{dz}{2\pi i}\Bigl(\frac{1}{1-\overline{z_\ell}\,z}+\frac{z_\ell}{z-z_\ell}\Bigr)
    \Bigl(\frac{1}{z-z_j}+\frac{\overline{z_j}}{1-\overline{z_j}\,z}\Bigr).
\end{equation}
The integral can be solved by residue inspection. The first term in the first parenthesis and the second term in the second parenthesis are analytic
in the disc. The product of the other two terms gives a zero integral when $z_j=z_\ell$ because the pole is then of second order, and also gives zero
otherwise because it is proportional to $1/(z-z_j)-1/(z-z_\ell)$. Combining the remaining terms gives
\begin{equation}
    \sum_kk^2\abs{a_k}^2
       =\sum_{j,\ell}\Bigl(\frac{1}{1-\overline{z_\ell}z_j}+\frac{z_\ell\overline{z_j}}{1-\overline{z_j}z_\ell}\Bigr)
       =\sum_{j,\ell}\frac{1+z_\ell\overline{z_j}}{1-z_\ell\overline{z_j}}.
\end{equation}
Now in order to get the variance, we have to subtract $(n-1)^2$ from this expression, or $1$ from each term. This gives the expression stated in the
proposition.
\end{proof}

The expression \eqref{var} is positive, because we can write it as
\begin{equation}\label{var1}
    \vartau=2\sum_{s=1}^\infty\left|\sum_{k=1}^{n-1}z_k^s\right|^2.
\end{equation}
This can vanish only if $\sum_kz_k^s=0$ for all $s$. But since by Newton's relations \cite{Macdonald} the coefficients of the polynomial
$\prod_k(z-z_k)$, i.e., the elementary symmetric functions of $(z_1,\ldots,z_{n-1})$, can be expressed in terms of the sums of powers, the equation
for the $z_k$ can only be $z^{n-1}=0$. In other words, $\vartau=0$ implies that $f(z)=\beta z^{n-1}$, $|\beta|=1$, corresponding to $n$ equal point
masses at the $n$-th roots of $\overline\beta\in S^1$. This should not surprise us: It corresponds precisely to a ``clock'' unitary, which cyclically
rotates $n$ basis states, of which one is the initial state. Clearly, this returns to the origin after exactly $n$ steps. For fixed $n$, small
variance still implies that all $z_k$ are small.

In the opposite direction, suppose that one of the zeros, say $z_1$, approaches the unit circle, while the others are fixed. Then the term with
$\ell=j=1$ in \eqref{var} diverges, while all other terms stay finite. Hence $\vartau\to\infty$. Characteristically this happens when the number $n$
of point measures, and hence the number of zeros changes its value. Indeed this could happen either by {\it fusion}, when two points converge to each
other (e.g., $u_{n}\to u_{n-1}$), or by a {\it shrinking}, when the weight of one point goes to zero (e.g., $m_n\to0$). In either case we get a
weakly convergent sequence of measures for which $n$ changes in the limit. But by weak convergence the sequence of Schur functions converges
uniformly on every disk of radius $(1-\varepsilon)$. This implies that the number of zeros of the Schur function in any such disc converges. In other
words, the change in the number of zeros can only happen by one zero converging to the unit circle, which entails $\vartau\to\infty$, as we have
seen. When the change in the measure is localized near one point, as for fusion and shrinking, then the uniform convergence argument also holds for a
neighborhood of other parts of the unit circle, and we can even say that the vanishing zero has to converge to the point with shrinking weight (as
shown in Fig.~\ref{fig:phasetwist}) or to the fused point.

One source of intuition for such connections between the point measures and the Schur zeros is the following electrostatic interpretation. Although
it is stated for finite sums of mass points, which is the case under study in this section, it can be easily generalized to infinitely supported
measures too.

\begin{lemma}
Consider a measure $\mu$ on the unit circle, which is a finite sum of point measures, so $\muh(z)=\sum_{i=1}^n{m_i}/({1-u_iz})$ with $m_i>0$,
$\sum_im_i=1$, and $\abs{u_i}=1$. Then the zeros $z_1,\ldots,z_{n-1}$ of the Schur function are precisely the stationary points of the
two-dimensional electrostatic potential
\begin{equation}\label{electricV}
    V(z)=\sum_{i=1}^n m_i\log\Bigl|{u_i-z}\Bigr|.
\end{equation}
In particular, all $z_k$ lie in the convex hull of the $u_i$.
\end{lemma}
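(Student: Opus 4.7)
The plan is to identify zeros of the Schur function with the solutions of a single rational equation, recognize this equation as the derivative of the holomorphic function whose real part is $V$, and then extract convex combinations by a standard conjugation trick.

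First I would compute the zeros of $f$ explicitly. By \eqref{Schur}, $f(z)=0$ (for $z\neq0$) is equivalent to $\bar\mu(z)=1$. Using $|u_i|=1$ so that $\bar u_i=1/u_i$, one finds
\[
\bar\mu(z)=\sum_{i=1}^n\frac{m_i}{1-\bar u_iz}.
\]
Subtracting $1=\sum_im_i$, the zero condition becomes $\sum_i m_i\bar u_iz/(1-\bar u_iz)=0$, which after dividing by $z$ and multiplying each term by $u_i$ turns into
\[
\sum_{i=1}^n\frac{m_i}{u_i-z}=0. \qquad(\ast)
\]
Since this is a rational equation whose numerator has degree $n-1$, it has exactly $n-1$ roots (counted with multiplicity), matching the $n-1$ zeros $z_1,\dots,z_{n-1}$ of $f$ determined earlier in the excerpt.

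Next I would show these are precisely the stationary points of $V$. Writing $h(z)=\sum_i m_i\log(u_i-z)$ (defined locally up to a choice of branch, which does not affect its real part), we have $V(z)=\Re\,h(z)$. For any holomorphic $h$, the Cauchy--Riemann equations give $\partial_x\Re h=\Re h'$ and $\partial_y\Re h=-\Im h'$, so the gradient of $V$ vanishes exactly where $h'(z)=0$. Since $h'(z)=-\sum_i m_i/(u_i-z)$, the critical points of $V$ are precisely the solutions of $(\ast)$, i.e.\ the zeros of $f$.

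Finally I would deduce the convex hull statement directly from $(\ast)$. Taking complex conjugates gives $\sum_i m_i/(\bar z-\bar u_i)=0$, and multiplying numerator and denominator by $z-u_i$ using $(\bar z-\bar u_i)(z-u_i)=|z-u_i|^2$ yields
\[
\sum_{i=1}^n\frac{m_i}{|z-u_i|^2}(z-u_i)=0.
\]
Setting $w_i:=m_i/|z-u_i|^2>0$ gives $z=(\sum_iw_iu_i)/(\sum_iw_i)$, exhibiting $z$ as a convex combination of the $u_i$. The one thing to double-check is that no $z_k$ can coincide with one of the $u_i$ (so the weights $w_i$ are well defined): this is automatic since $\mu$ has a genuine pole at each $u_i$, so $\bar\mu(z_k)=1$ forbids $z_k=u_i$. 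The only potentially delicate point in the whole argument is the bookkeeping of the complex conjugations built into the convention $\bar g(z)=\overline{g(\bar z)}$ used in the excerpt, but once $(\ast)$ is written down the rest follows mechanically.
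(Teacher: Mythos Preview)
Your proof is correct and, for the identification of the Schur zeros with the critical points of $V$, follows essentially the same route as the paper: both reduce the zero condition on $f$ to the rational equation $\sum_i m_i/(u_i-z)=0$ and then recognize this expression as (a conjugate/sign variant of) the gradient of $V$.

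The one place you genuinely diverge is the convex-hull statement. The paper argues by contradiction, assuming a zero $z$ lies outside the convex hull of the $u_i$, choosing a separating linear functional $\RE(\lambda\,\cdot\,)$, and showing that $\RE\bigl(\lambda\sum_i m_i/(\overline{u_i}-\overline z)\bigr)$ is strictly negative, contradicting its vanishing. You instead use the direct Gauss--Lucas manipulation: conjugate $(\ast)$, clear denominators via $|z-u_i|^2$, and read off $z$ explicitly as a convex combination $z=\sum_i w_iu_i/\sum_i w_i$ with positive weights $w_i=m_i/|z-u_i|^2$. Your argument is shorter and constructive (it actually exhibits the barycentric coefficients), while the paper's separating-functional proof is the standard convex-geometry template and generalizes a bit more readily to infinitely supported measures, as the paper hints it wants to do. Either way the content is the same; your bookkeeping of the $\overline{\muh}$ convention and the check that no $z_k$ coincides with a $u_i$ are both fine.
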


The $m_i$ are the point charges of this electrostatic system located at the $u_i$. So each of the point masses generates a repulsive force field. If
one of the charges $m_i$ is small, there will be a stationary point close to this charge at $u_i$, as seen in Fig.~\ref{fig:phasetwist}. The convex
hull of the $u_i$ is indicated in red in Fig.~\ref{fig:phasetwist}.

\begin{proof}
Clearly, by \eqref{Schur} the zeros of $f$ are those of
\begin{equation}\nonumber
    \frac1z\Bigl(\overline{\muh}(z)-1\Bigr)
       =\frac1z\sum_im_i\left(\frac1{1-\overline{u_i}z}-1\right)
       =\sum_i\frac{m_i\overline{u_i}}{1-\overline{u_i}z}
       =\sum_i\frac{m_i}{u_i-z}.
\end{equation}
It is easily verified that this is equal to
\begin{equation}\nonumber
    \frac{\partial V(x+iy)}{\partial x}+i\frac{\partial V(x+iy)}{\partial y}
\end{equation}
so at the zeros of $f$ the gradient of $V$ vanishes. The statement about the convex hull may seem obvious from this. More formally, suppose to the
contrary that the above expression vanishes for some $z$ outside the convex hull of the $u_i$. Then there must be a linear functional
$z\mapsto\RE(\lambda z)$ and a constant $c$ such that $\RE(\lambda u_i)\leq c$ for all $i$, but $\RE(\lambda z)>c$. But then
\begin{equation}\nonumber
    0=\RE\lambda\sum_i\frac{m_i}{\overline{u_i}-\overline{z}}
     =\sum_i\frac{m_i}{\abs{u_i-z}^2}\ \RE(\lambda(u_i-z))
     \leq \sum_i\frac{m_i(c-\RE(\lambda z))}{\abs{u_i-z}^2} <0,
\end{equation}
which is a contradiction.
\end{proof}

We note that the arguments which we gave for the connection between low variance of first return times and concentration of Schur zeros near the
origin is valid only for fixed $n$. i.e., explicit bounds, which could be derived from these arguments will depend on $n$. This is shown by the
following example, in which all zeros (an increasing number) go the unit circle, and the variance goes to zero. We choose the zeros to be
$z_j=\lambda e^{2\pi ik/(n-1)}$, for $k=1,\ldots,n-1$. Then in the sum \eqref{var1} the sum in the absolute value is non-zero only when $s$ is a
multiple of $(n-1)$, in which case it is equal to $(n-1)\lambda^s$.  Summing over these multiples and noting that $s=0$ is excluded also, we find
\begin{equation}\label{vartauex}
    \vartau=\frac{(n-1)^2\ \abs\lambda^{2(n-1)}}{1-\abs\lambda^{2(n-1)}}.
\end{equation}
Hence if we choose $\log\abs\lambda^2=-3(n-1)^{-1}\log(n-1)$ we have $\abs\lambda\to1$ and $\vartau\asymp (n-1)^{-1}\to0$.
\section{Examples with absolutely continuous measures} \label{sec:as}

In this section we deal with transient pairs $(U,\phi)$, i.e., those whose spectral measure $\mu$ has an absolutely continuous part. We focus on ways
to get the first return probabilities $\abs{a_n}^2$ or at least their sum, the total return probability
\[
    R=\sum_{n=1}^\infty |a_n|^2= \int_{-\pi}^\pi \frac{dt}{2\pi}\,|f(e^{it})|^2 =\norm f^2.
\]
A typical class of such models are so-called quantum walks, i.e., systems with a discrete spatial degree of freedom, which in each step can change
only by a finite distance. There is a new interest in quantum walks coming from recent experimental realizations \cite{Karski:2009}. These systems
were also the motivation for \cite{Stefanak}, as well as for us. This section is therefore split between two approaches to quantum walks, which have
their characteristic strengths, but also a large overlap.

The first method, introduced by two of us (F.A.G.\ and L.V.) with coauthors \cite{CMGVI,CGMV2010} builds on a method turning {\it any} pair
$(U,\phi)$ into a quantum walk. This walk most naturally lives on $\mathbb{Z}_+=\{0,1,2,\dots\}$, although it can be extended to deal with
$\mathbb{Z}$. It almost always has position dependent steps. When the dynamics is periodic from some site onwards, one can determine the Schur
function by a fixed point equation, which typically gives absolutely continuous spectrum plus some discrete eigenvalues related to the initial
segment.

The second method is designed for strictly translation invariant systems in any lattice dimension. It has been introduced in \cite{Scudo} and was
used by two of us (A.H.W.\ and R.F.W.) with coauthors \cite{timeRandom} for getting the asymptotic position distribution even in the presence of time
dependent noise. The core of the method is the joint diagonalization of $U$ and the translations, which amounts to diagonalizing a momentum dependent
matrix. Finite range spatial variations in the dynamics can be dealt with via perturbation theory \cite{molecules}, which again produces additional
discrete eigenvalues analogous to the introduction of impurities into a periodic potential.

Both methods apply to so-called ``coined walks'' in one dimension, which live on the Hilbert space $\ell^2(\Ir)\otimes\Cx^2$, whose basis vectors we
write as $\ket{x,\alpha}$, $x\in\Ir$, and $\alpha\in\{\uparrow,\downarrow\}$. The time step unitary is factorized into ``shift'' and ``coin'' $U=SC$
with $S\ket{x,\uparrow}=\ket{x+1,\uparrow}$ and $S\ket{x,\downarrow}=\ket{x-1,\downarrow}$, and a coin operation, which acts at each site $x$
separately as a $2\times2$ unitary matrix $C_x$ on $\ket{x,\updownarrow}$. The translation invariant case corresponds to a site independent coin
$C_x\equiv C_0$, and will be treated in both approaches. In contrast to  the classical result of P\'{o}lya, who found recurrence for one dimensional
unbiased nearest neighbor random walks the coined quantum walks are always transient. Intuitively, this may be related to the much slower spreading
($\sim\sqrt t$ rather than $\sim t$) of the classical walks: they just spend more time close to home.

\subsection{Return probability from orthogonal polynomials}
How can we turn any pair $(U,\phi)$ into a quantum walk? The basic idea is to build a basis of the Hilbert space from the Gram-Schmidt
orthogonalization of the sequence $\phi$, $U^{-1}\phi$, $U\phi$, $U^{-2}\phi$, $U^2\phi,\ldots$, which is equivalent to building the $\mu$-orthogonal
Laurent polynomials on the unit circle. This leads to a canonical five-diagonal matrix representation of $U$, a  so-called CMV matrix
\cite{CMV,Simon1,SimonCMV,Watkins}
\[
\mathcal{C} = \text{\small $
\begin{pmatrix}
\overline{\gamma}_0 & \rho_0 &  0 &  0 &  0 &  0 & \dots
\\
\rho_0\overline\gamma_1 &  -\gamma_0\overline{\gamma}_1 &  \rho_1\overline\gamma_2 &  \rho_1\rho_2 &  0 &  0 & \dots
\\
\rho_0\rho_1 &  -\gamma_0\rho_1 &  -\gamma_1\overline{\gamma}_2 &  -\gamma_1\rho_2 &  0 &  0 & \dots
\\
0 &  0 &  \rho_2\overline\gamma_3 &  -\gamma_2\overline{\gamma}_3 &  \rho_3\overline\gamma_4 &  \rho_3\rho_4 & \dots
\\
0 &  0 &  \rho_2\rho_3 &  -\gamma_2\rho_3 &  -\gamma_3\overline{\gamma}_4 &  -\gamma_3\rho_4 & \dots
\\
0 &  0 &  0 &  0 &  \rho_4\overline\gamma_5 &  -\gamma_4\overline{\gamma}_5 & \dots
\\
0 &  0 &  0 &  0 &  \rho_4\rho_5 &  -\gamma_4\rho_5 & \dots
\\
\dots &  \dots &  \dots &  \dots &  \dots &  \dots & \dots
\end{pmatrix}$},
\qquad
\begin{aligned}
& |\gamma_k|<1, \\ & \rho_k=\sqrt{1-|\gamma_k|^2}.
\end{aligned}
\]
Then the pair $(\mathcal{C},e_0)$, with $e_0=(1,0,0,\dots)^t$ the first canonical basis vector is unitarily equivalent to $(U,\phi)$.

The coefficients $\gamma_k$ are called the Schur parameters of $f$ or the Verblunsky coefficients of the measure $\mu$ \cite{Verblunsky,Simon1}. They
yield a continued fraction expansion of the Schur function $f$ through the so called Schur algorithm \cite{Schur1918}
\begin{equation}\label{SchurIterate}
    f_0(z)=f; \qquad f_{k+1}(z)=\frac{1}{z}\frac{f_k(z)-\gamma_k}{1-\overline{\gamma_k}f_k(z)}, \qquad \gamma_k=f_k(0).
\end{equation}
Each $f_k$ is a Schur function in its own right, which has Schur parameters $\gamma_k$, $\gamma_{k+1}$, $\gamma_{k+2},\dots$. According to a
remarkable result due to S.~Khrushchev \cite{Khrushchev} it is closely related to the pair $(\mathcal{C},e_k)$ starting at the basis vector $e_k$,
which has the Schur function $f_kB_k$, where $B_k$ is a finite Blaschke product. Hence $\norm{f_k}^2$ is equal to the total return probability
starting from $e_k$.

Coined walks on $\Ir_+$ are a special case of this general construction, namely the case that the odd Schur parameters $\gamma_{2x+1}$ vanish. Thus
we get the CMV matrix
\begin{equation}\label{eq:walk_cmv}
\mathcal{C} = \text{\small $
\begin{pmatrix}
\overline{\gamma}_0 &  \rho_0 &  0 &  0 &  0 &  0 & \dots
\\
0 &  0 &  \overline\gamma_2 &  \rho_2 &  0 &  0 & \dots
\\
\rho_0 &  -\gamma_0 &  0 &  0 &  0 &  0 & \dots
\\
0 &  0 &  0 &  0 &  \overline\gamma_4 &  \rho_4 & \dots
\\
0 &  0 &  \rho_2 &  -\gamma_2 &  0 &  0 & \dots
\\
0 &  0 &  0 &  0 &  0 &  0 & \dots
\\
0 &  0 &  0 &  0 &  \rho_4 &  -\gamma_4 & \dots
\\
\dots &  \dots &  \dots &  \dots &  \dots &  \dots & \dots
\end{pmatrix}$}.
\end{equation}
Indeed, if we relabel our basis as $e_{2x}=\ket{x,\uparrow}$ and $e_{2x+1}=\ket{x,\downarrow}$ we get precisely the coined quantum walk on the half
line with reflecting boundary condition and the coins
\begin{equation}\label{CMVcoin}
    C_x = \begin{pmatrix} \rho_{2x} & -\gamma_{2x} \\ \overline\gamma_{2x} & \rho_{2x} \end{pmatrix},
    \qquad |\gamma_{2x}|\leq1,
    \qquad \rho_{2x}=\sqrt{1-|\gamma_{2x}|^2}.
\end{equation}
Conversely, given a sequence of coins, we can immediately build the CMV representation $\mathcal{C}$ of the corresponding walk. We will assume from
now on that $|\gamma_{2x}|<1$, which makes the walk ``irreducible'', i.e., there is no finite interval in $x$ which is decoupled from its complement.

Because the odd Schur parameter vanish the iteration \eqref{SchurIterate} gives $f_{2x}(z)=z^{-1}f_{2x-1}(z)$. We conclude that the states
$\ket{x-1,\downarrow}$ and $\ket{x,\uparrow}$ of an irreducible two-state quantum walk on $\mathbb{Z}_+$ have the same return probability, and this
return probability does not depend on the coins $C_y$, $y<x$. This is a remarkable feature, which we will explore in a separate publication.

\subsubsection{1D Quantum walk with a constant coin on $\Ir_+$}
\label{sec:cmv_cc} As an example we will consider a quantum walk with a constant coin of the form \eqref{CMVcoin} with $\gamma_{2x}\equiv\gamma$
constant. The Schur parameters are $\gamma,0,\gamma,0,\gamma,0,\dots$ and hence the sequence of Schur iterates $f_k$ is periodic with period $L=2$
\cite{CMGVI,CGMV2010}. Therefore it satisfies a fixed point equation for a composition of $L$ steps of Schur iteration. In our case $f=f_2$ is just a
quadratic equation, which has the solution
\begin{equation}\label{ff2}
    f(z) = \frac{z^2-1+\sqrt{(z^2-1)^2+4|\gamma|^2z^2}}{2\overline\gamma z^2}.
\end{equation}
The branch of the square root is that one converging to 1 when $z\to0$.

The measure related to $f$, i.e. the spectral measure of the cyclic vector $|0,\uparrow\rangle$, has a weight and a mass point which is absent only
when the coin is symmetric \cite{CMGVI}. Hence, all the states are transient in case of a symmetric coin, otherwise a one-dimensional transient
subspace appears, namely, the eigenspace of the simple eigenvalue given by the mass point.

The return probability for the state $|0,\uparrow\rangle$ is
\[
 R = \|f\|^2  = \int_0^{2\pi} \frac{dt}{2\pi} |{f}(e^{it})|^2,
 \qquad
  {f}(e^{it}) = \frac{e^{-it}}{\overline\gamma} (h(t)+i\sin t),
\]
where
\[
h(t) = \begin{cases}
          \sign(\cos t) \; \sqrt{|\gamma|^2-\sin^2 t} & \text{ if } |\sin t| \leq |\gamma|,
          \\
          -\sign(\sin t) \; i\sqrt{\sin^2 t-|\gamma|^2} & \text{ if } |\sin t| > |\gamma|.
       \end{cases}
\]
The final result is
\begin{equation} \label{R-Z+}
R = \frac{2}{\pi|\gamma|^2} \left\{ \rho|\gamma| + (1-2\rho^2)\eta \right\}, \quad \rho=\sqrt{1-|\gamma|^2}, \quad \eta=\arcsin|\gamma|,
\end{equation}
which actually holds for any basis state due to Khrushchev's result and the fact that $f(z)=f_{2x}(z)=z^{-1}f_{2x-1}(z)$. The return probability does
not depend on the phase of $\gamma$, and it is an increasing function of $|\gamma|$.

Using the generating function of the Legendre polynomials $P_n$,
\[
\frac{1}{\sqrt{1-2xz+z^2}} = \sum_{n\geq0}P_n(x)z^n,
\]
we can also find the power expansion of $f$, obtaining for $|0,\uparrow\rangle$ the arrival amplitudes $a_n$, where ${a}_1=\overline\gamma$ and
\[
{a}_{2n}=0, \qquad {a}_{2n+1}=\frac{P_{n-1}(x)-xP_n(x)}{2\gamma(n+1)}, \qquad x=1-2|\gamma|^2, \qquad n\geq1.
\]
Bearing in mind that $P_n(\cos(\theta)) \sim \sqrt{2/\pi n \sin(\theta)} \cos[(n+1/2)\theta-\pi/4]$ we find that the first return
probabilities $|a_n|^2$ decay as $n^{-3}$,
\[
|a_{2n+1}|^2 \sim \frac{\rho}{2\pi|\gamma|} \{1-\sin[(4n+2)\eta]\} \, n^{-3}.
\]

\subsubsection{1D Quantum walk with a constant coin on $\mathbb{Z}$}
\label{sec:ccdouble} Following the approach in \cite{CMGVI,CGMV2010}, we choose the basis $\{|k\rangle\}_{k=0}^\infty = \{\ket{0,\uparrow}$,
$\ket{{-}1,\downarrow}$, $\ket{{-}1,\uparrow}$, $\ket{0,\downarrow}$, $\ket{1,\uparrow}$, $\ket{{-}2,\downarrow}$, $\ket{{-}2,\uparrow}$,
$\ket{1,\downarrow},\dots\}$ to mimic the half infinite situation.  This leads to a  CMV representation $\mathcal C$ of the unitary step operator
similar to \eqref{eq:walk_cmv} with vanishing odd Schur parameters, but {\it matrix valued} ones, i.e.,  $\gamma_k$ and $\rho_k$ are now antidiagonal,
respectively diagonal, $2\times2$ matrices. The Carath\'{e}odory function and the Schur function likewise become matrix valued. The final result is that,
for any basis state, the first return probabilities $|a_n|^2$ decay as $n^{-3}$ like in the previous case,
\[
   a_{2n-1}=0, \quad |a_{2n}|^2 \sim \frac{2\rho}{\pi|\gamma|} \{1+\sin(4n\eta)\} n^{-3}, \quad \rho=\sqrt{1-|\gamma|^2}, \quad \eta=\arcsin|\gamma|,
\]
and the total return probability is given in terms of the Schur function $f$ in \eqref{ff2},
\begin{equation}\label{returnIr}
  R = \|f^2\|^2  = \frac{2}{\pi|\gamma|^4} \left\{ (1+2\rho^2)\rho|\gamma| + (1-4\rho^2)\eta \right\}.
\end{equation}
We have a similar behavior as in the case of $\mathbb{Z}_+$ but, for the same coin, the return probability is slightly smaller for $\mathbb{Z}$ than
for $\mathbb{Z}_+$, as Figure \ref{fig:R} shows.

\begin{figure}[ht]
\centering
 \includegraphics[width=5cm,height=5cm]{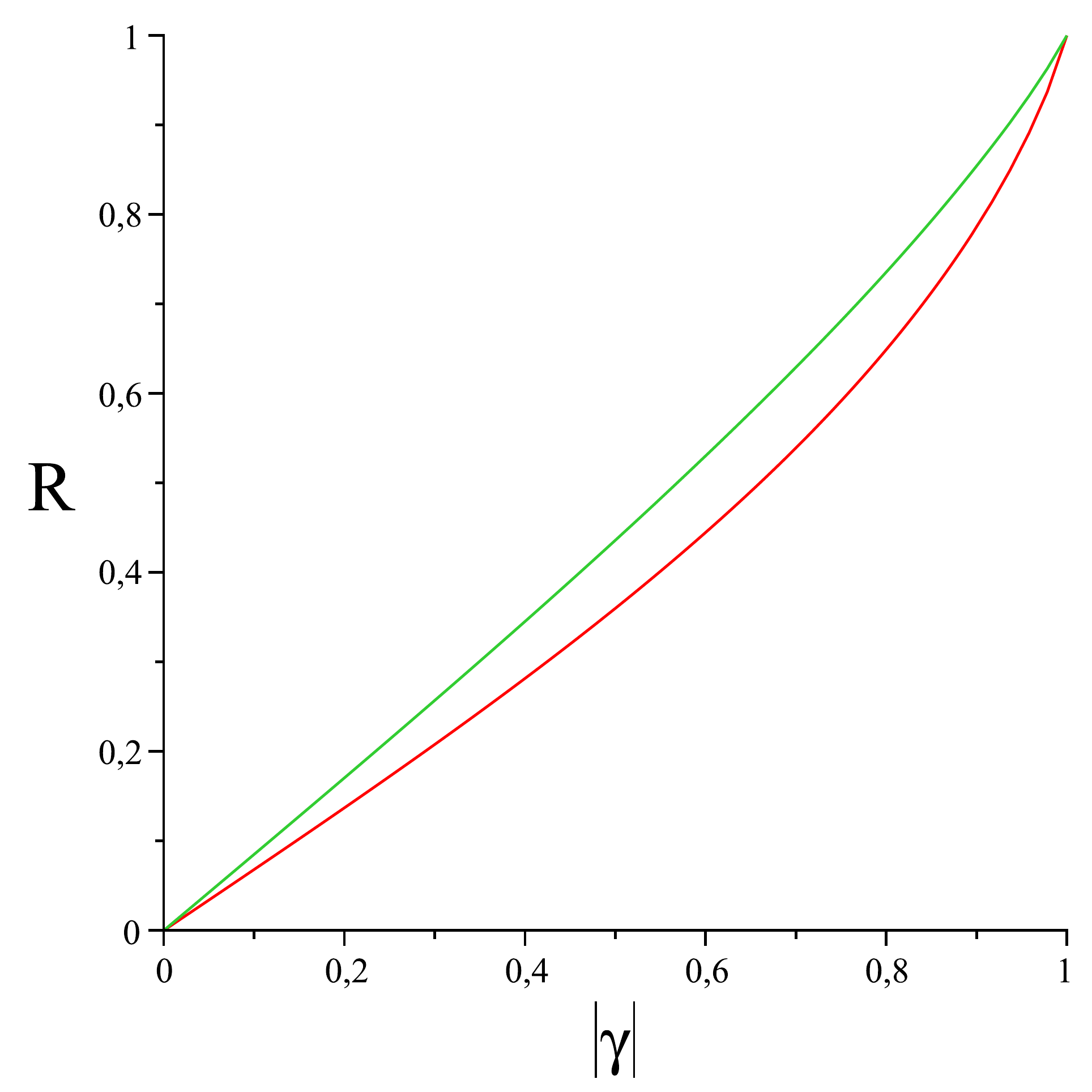}
  \caption{The return probability $R$ for a basic state $|k,\uparrow\rangle$, $|k,\downarrow\rangle$ of a quantum walk with a constant coin $C=(c_{ij})$
  as a function of $|\gamma|=|c_{21}|$. The upper curve corresponds to a constant coin on the non-negative integers, the lower one on the integers.}
  \label{fig:R}
\end{figure}

We can summarize stating that the recurrence properties for a constant coin on $\mathbb{Z}$ or $\mathbb{Z}_+$ are similar, but the recurrence is
slightly more prominent when the translation invariance is broken by the boundary conditions in $\mathbb{Z}_+$.

\subsection{Return times by the Fourier method}
This method was developed for translationally invariant quantum walks \cite{Scudo,timeRandom}. In this sense it is much more special than the CGMV
method described above, but has the advantage that it works just as well in higher lattice dimension, and with an arbitrary initial state, without
the need to recompute the Verblunsky coefficients. Moreover, the Fourier method has been extended to certain decoherent walks and interacting
two-particle systems \cite{timeRandom,spacetimeRandom,molecules}.

The Hilbert space is now $\WSp$, where $s\in\Nl$ is the dimension of the underlying spatial lattice, and $\KK$ is the space of internal degrees of
freedom at each site $x$. We only assume that $U$ is unitary on this Hilbert space, and commutes with the lattice translations. Often it is made part
of the definition of a quantum walk that it has strictly finite propagation in each step, but we will not need this condition here. Therefore, we can
jointly diagonalize $U$ with the translations by a Fourier transform in the spatial variable, which means that the Hilbert space vectors are
naturally considered as $\KK$-valued functions of the Fourier variable ``momentum'' $p=(p_1,\ldots,p_s)\in[-\pi,\pi]^s$. In this representation $U$
becomes multiplication by a $p$-dependent unitary operator on $\KK$, denoted by $U(p)$. The assumption of finite propagation speed would imply that
the matrix elements of $U(p)$ are Laurent polynomials in the variables $\exp(ip_k)$.

Then for any initial vector $\phi$  we get the Stieltjes function \eqref{muh}
\begin{equation} \label{eq_stielt_four}
  \muh(z) = \bra{\phi}{(1-z U)^{-1}}\ket{\phi} = \int dp\; \braket{\phi(p)}{(1-z U(p))^{-1}|\phi(p)}
\end{equation}
When the starting point is the origin, $\phi(p)\equiv\phi\in\KK$ is independent of $p$, and we can evaluate the integral as $\muh(z)=\bra\phi
M(z)\ket\phi$, with the Stieltjes operator
\begin{equation}\label{StieltjesM}
    M(z)=\int dp\; {(1-z U(p))^{-1}}.
\end{equation}

For a coined one-dimensional quantum walk with the same coin as in Sect.~\ref{sec:cmv_cc} the walk operator takes the form
\begin{align}\label{eq:def_qw}
 U(p)=S(p)\cdot C = \begin{pmatrix}
   e^{\ic p} &0\\  0 &e^{-\ic p} \end{pmatrix}
   \begin{pmatrix} \rho  & -\gamma \\ \overline\gamma &  \rho \end{pmatrix}.
\end{align}
The integral \eqref{StieltjesM} can be evaluated by the residue theorem and gives
\begin{align}
  M(z) =\frac{1}{2 g(z)} \begin{pmatrix}
     1-z^2+ g(z) &- \frac{\gamma}{\rho}  ( 1+z^2- g(z))\\
   \frac{\overline \gamma}{\rho} ( 1+z^2- g(z))& 1-z^2+ g(z)
  \end{pmatrix}\;.
\end{align}
with $g(x)=\sqrt{(1-z^2)^2+4\abs{\gamma}^2 z^2}$. From this we can immediately write down the Schur function and compute the total return probability
$R$ as the 2-norm. In agreement with Sect.~\ref{sec:ccdouble} the cases $\phi=\ket{\uparrow}$ and $\phi=\ket{\downarrow}$ give the same result
\eqref{returnIr}, because the diagonal elements of $M(z)$ are equal. However, for starting/absorbing states $\phi$ which are superpositions of these
two we get other Schur functions, because $M(z)$ is not a multiple of the identity.
\begin{figure}[ht]
\centering
 \includegraphics[width=6cm]{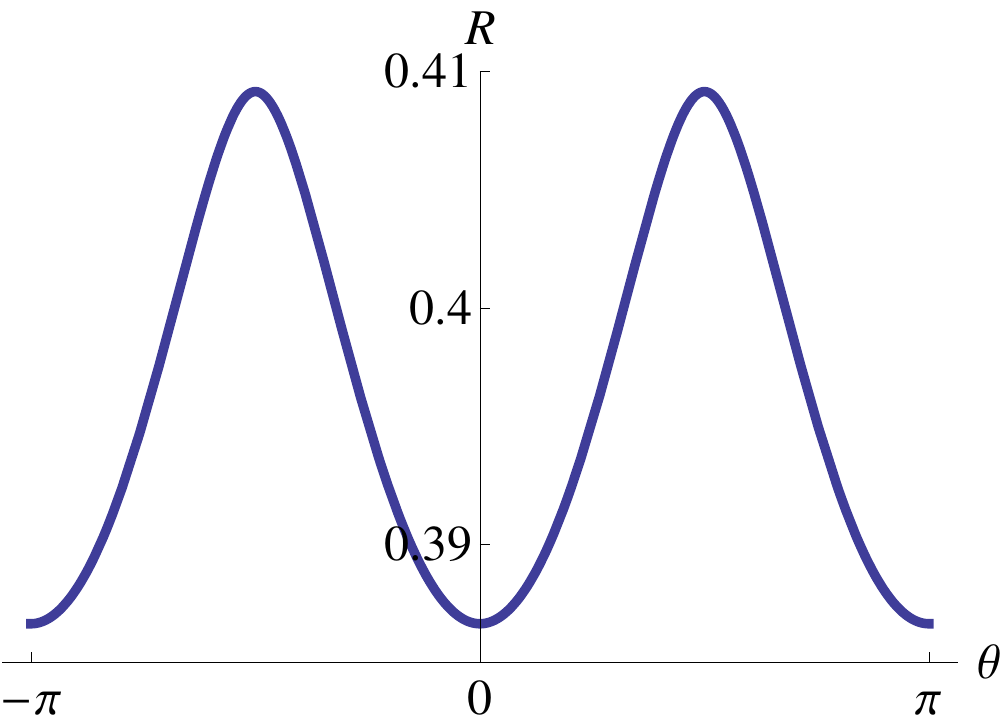}
  \caption{$\theta$-dependence of the return probability $R$ of a quantum walk with $\gamma = 1/\sqrt 2$ (see \eqref{eq:def_qw}) of the initial state $\phi=(1,\exp(\ic\theta))$.}
  \label{fig:RQW}
\end{figure}

Indeed, this can lead to different $R$. This can be seen from fig. \ref{fig:RQW}, which shows the dependence of $R$ on the relative phase of the initial state $\phi=(1,\exp(\ic\theta))$ for the quantum walk with  $\gamma = 1/\sqrt 2$ according to \eqref{eq:def_qw}.

\subsubsection{Local perturbations}
In the CGMV approach a local perturbation is easily implemented by letting the Schur coefficients be constant only from some index onwards. From
there on the $f_k$ are determined by the fixed point equation, and $f=f_0$ is obtained by inverse Schur iteration \eqref{SchurIterate}.

In the Fourier approach a local perturbation can be turned into a perturbation only at the origin, by redefining a block of cells as a single one.
The resolvent of the perturbed unitary operator can be evaluated with the resolvent formula for perturbations. Since the ``unperturbed'' resolvent of
the above operator $U$ can be computed separately for each $U(p)$ this leads to a finite dimensional, albeit $p$-dependent matrix computation. Of
course, as in the CGMV case this may quickly lead to rather unwieldy expressions. However, at least for numerical work one gets a systematic method.
\section{Outlook}
\subsection{Examples with singular measures}
Obviously, there is a wide range of examples which one would like to treat under this heading. In the quantum walk field, disordered walks, i.e.,
operators with space dependent i.i.d.\ random coins come to mind. In this case one has Anderson localization \cite{disordered}, i.e., dense point
spectrum. In almost periodic examples, such as walks in an irrational magnetic field, one expects purely singular spectrum. For irrationals which are
well approximated by rationals, e.g., Liouville numbers, there may be late returns grouped in a hierarchy of time scales. A further family of
examples is given by measures with some self-similarity such as a famous example by Riesz \cite{Riesz1918,Simon1,GV}.For the time being we have no
general results about return times in these examples, but it seems to be a rich field for investigations linking spectral properties with the
dynamics of propagation, which we intend to explore further.

\subsection{Generalizations}
We have focused completely on the case where absorption happens only in one pure state, which is also the initial one. The main reason for this was
to have a very direct link to spectral theory, and a dependence only on the pair $(U,\phi)$ and nothing else. However, from the point of view of
physical applications some generalizations are desirable, and can be pursued very much along the lines set out in this paper. Indeed there is no
reason why in Sect.~\ref{sec:def} the initial and the absorbing state should be the same, so one can immediately generalize to ``first arrival''
rather than ``first return''. It is often natural to consider a more than one dimensional space to be absorbing, e.g., when we want to discuss the
return of a coined walk to the origin, regardless of the internal state is. In that case one might also want to look at spin resolved arrival events
and, quite generally, at an array of counters. This has been studied in the continuous time case \cite{Allcock,arrival}, which is a further
generalization of interest. When an appropriate weakening of the absorption process is introduced to avoid the Zeno effect \cite{Castrigiano} in the
continuous time limit one finds a picture analogous to the discrete time case: The absorbing perturbation then modifies a one-parameter unitary group
to a semigroup of contractions. The difference of their generators, sometimes called an optical potential, may be unbounded or even fail to be a
closable operator, as in the case of an absorbing point ``potential'' $-i\lambda\delta(x)$ perturbing the Laplacian (free particle). The interplay
between spectrum of the Hamiltonian and arrival time in the continuous case contains a recently established uncertainty relation \cite{Uncertainty},
but much is left to explore.

\section*{Acknowledgements}
We are grateful for the hospitality of Centro de Ciencias de Benasque Pedro Pascual in Benasque (Spain), where our collaboration was initiated. A.H.
Werner and L. Vel\'azquez want to thank the organizers of a Quantum Information workshop in the same center, where further parts of this work were
done.

A.H. Werner and R.F. Werner acknowledge support from the Deutsche For\-schungsgemeinschaft (Grant Forschergruppe 635) and the EU project COQUIT.

F.A. Gr\"{u}nbaum acknowledges support from the Applied math. Sciences subprogram of the Office of Energy Research, USDOE, under Contract
DE-AC03-76SF00098.

The work of L. Vel\'azquez was partly supported by the research projects MTM2008-06689-C02-01 and MTM2011-28952-C02-01 from the Ministry of Science
and Innovation of Spain and the European Regional Development Fund (ERDF), by Departamento de Ciencia, Tecnolog\'{\i}a y Universidad from Gobierno de
Arag\'on, by Project E-64 of Diputaci\'on General de Arag\'on and by Instituto Universitario de Matem\'atica Aplicada and Departamento de
Matem\'atica Aplicada from Universidad de Zara\-goza.

\bibliographystyle{abbrv}
\bibliography{recur}

\end{document}